\newtheorem{thm}{Theorem}[section]
\theoremstyle{remark}
\theoremstyle{definition}
\newtheorem{defin}[thm]{Definition}
\newtheorem{experiment}[thm]{Experiment}
\newcommand{\vect}[1]{\boldsymbol{#1}}
\newcommand{\dotprod}[2]{\left \langle {#1}, {#2} \right \rangle}
\newcommand{\E}[1]{E\!\left [ {#1} \right ]}
\newcommand\numberthis{\addtocounter{equation}{1}\tag{\theequation}}
\newcommand{\ktimes}{\overset{\scriptscriptstyle K} \otimes}
\newcommand{\R}{{\mathbb R}}
\newcommand{\mspan}{\operatorname{span}}
\newcommand{\diag}[1]{\operatorname{diag}\!\left({#1}\right)}
\newcommand{\1}{\mathbb{1}}
\newcommand{\proj}{\operatorname{Proj}}
\title{An Online Algorithm for Learning Selectivity to Mixture Means}
\author[1]{Matthew Lawlor}
\author[2]{Steven Zucker}
\affil[1]{Yale University\footnote{now at Google Inc.}, matthew.lawlor@yale.edu}
\affil[2]{Yale University, zucker@cs.yale.edu}
\begin{document}
\maketitle

\begin{abstract}
We develop a biologically-plausible learning rule called Triplet BCM that provably
converges to the class means of general mixture models.  This rule generalizes the
classical BCM neural rule, and provides a novel interpretation of classical BCM as performing a kind of tensor decomposition.  It achieves a substantial generalization over classical BCM by incorporating triplets of samples from the mixtures, which provides a novel information processing interpretation to spike-timing-dependent plasticity.  We provide complete proofs of 
convergence of this learning rule, and an extended discussion of the connection between BCM and tensor learning.    
\end{abstract}

Spectral tensor methods are emerging themes in machine learning, but they
remain global rather than ``on-line.'' While incremental (on-line) learning can be useful in many practical applications, it is essential for biological learning. We introduce a triplet learning rule for mixture distributions based on a tensor formulation of the BCM biological learning rule. It is implemented in a feed forward fashion, removing the need for backpropagation of error signals.  

Our main result is that a modified version of the classical Bienenstock-Cooper-Munro \cite{bienenstock1982theory} synaptic update rule, a neuron can perform a tensor decomposition of the input data.  By incorporating the interactions between input triplets (commonly referred to as a multi-view assumption), our learning rule can provably learn the mixture means under an extremely broad class of mixture distributions and noise models. This improves on the classical BCM learning rule, which will not converge properly in the presence of
noise. We also provide new theoretical interpretations of the classical BCM rule, specifically we show the classical BCM neuron objective function is closely related to some objective functions in the tensor decomposition literature, when the input data consists of discrete input vectors.  We also prove convergence for our modified rule when the data is drawn from a general mixture model.

The multiview requirement has an intriguing
implication for neuroscience. Since spikes arrive in waves, and 
spike trains matter for learning \cite{froemke2002spike},  our model suggests that
{\em the waves of spikes arriving during adjacent epochs in time provide multiple
samples of a given stimulus}. This provides a powerful information processing interpretation to biological learning.
To realize it fully, we note that while classical BCM can be implemented via
spike timing dependent plasticity \cite{pfister2006triplets}\cite{gjorgjieva2011triplet}\cite{caporale2008spike}\cite{song2000competitive}. However, 
most of these approaches require much stronger distributional assumptions on the input data, or learn a much simpler decomposition of the data than our algorithm.  Other, Bayesian methods \cite{nessler2009stdp},  require the computation of a posterior distribution with implausible normalization requirements.  Our learning rule successfully avoids these issues, and has provable guarantees of convergence to the true mixture means. 

This article forms an extended technical presentation of 
some proofs introduced at NIPS 2014\cite{lawlor2014feedforward}, which has more discussion
on the implications for biological learning, as well as fits of this model to spike timing dependent plasticity data.
We will not formalize the connection to biology in this article, instead we present a connection between classical BCM and tensor decompositions, and a proof that under a broad class of mixture models the triplet BCM rule can learn selectivity to a single mixture.  We also show that a laterally connected network of triplet BCM neurons will each learn selectivity to different components of the mixture model. 

The outline for this article is as follows:
\begin{itemize}
  	\item Tensor notation and tensor decomposition of mixture moments under the triplet input model
  	\item Introduction to classical BCM
  	\item Connection between classical BCM and tensor decompositions
  	\item Definition of triplet BCM, and proof of convergence of \emph{expected update} under the triplet input model
  	\item Finally, the main contribution of this article is a proof of convergence with probability one under the triplet input model. 
  \end{itemize}  

\section{Notation for Tensor Products}
Following Anandkumar et. al., \cite{anandkumar2012tensor} we will use the following notation for 
tensors.  Let $\otimes$ denote the tensor product.  If $T = \vect{v}_1 \otimes ... \otimes \vect{v}_k$ then we say that \[T_{i_1, ..., i_k} = \prod_{j = 1}^k \vect{v}_j(i_j) \]

We denote the application of a $k$-tensor to $k$ vectors by $T(\vect{w}_1, ..., \vect{w}_k)$
where 
\[T(\vect{w}_1, ..., \vect{w}_k) = \sum_{i_1, ..., i_k} T_{i_1, ..., i_k}\prod_j\vect{w}_j(i_j)\] so in the simple case where $T = \vect{v}_1 \otimes ... \otimes \vect{v}_k$, 
\[ T(\vect{w}_1, ..., \vect{w}_k) = \prod_j \dotprod{\vect{v}_j}{\vect{w}_j}\]

We further denote the application of a $k$-tensor to $k$ matrices by $T(M_1, ..., M_k)$
where 
\[T(M_1, ..., M_k)_{i_1, ..., i_k} = \sum_{j_1, ..., j_k} T_{j_1, ..., j_k}[M_1]_{j_1, i_1}...[M_k]_{j_k,i_k}\]

Thus if $T$ is a $2$-tensor, $T(M_1, M_2) = M_1^T T M_2$ with ordinary matrix multiplication.
Similarly, $T(\vect{v}_1, \vect{v}_2) = \vect{v}_1^T T \vect{v}_2$

We say that $T$ has an orthogonal tensor decomposition if 

\[T = \sum_k \vect{v}_k \otimes \vect{v}_k \otimes ... \otimes \vect{v}_k \] and
\[\dotprod{\vect{v}_i}{\vect{v}_j} = \delta_{i}^j\]

For more on orthogonal tensor decompositions see \cite{anandkumar2012tensor}.  Let $T = \sum_k \lambda_k \mu_k \otimes \mu_k \otimes \mu_k$ and $M = \sum_k \lambda_k \mu_k \otimes \mu_k$ where $\mu_k \in \R^n$ are assumed to be linearly independent, and $\lambda_k > 0$.  We also assume , $n\ge k$, so $M$ is a symmetric, positive semidefinite, rank $k$ matrix.  
Let $M = UDU^T$ where $U\in \mathbb{R}^{n \times k}$ is unitary and $D \in R^{k \times k}$ is diagonal.  Denote $W = UD^{-\frac{1}{2}}$.  Then
$M(W, W) = I_k$.  Let $\tilde{\mu}_k = \sqrt{\lambda_k} W^T \mu_k$.  Then
\begin{align}
M(W, W) = W^T \sum_k \sqrt{\lambda_k}\mu_k \otimes \sqrt{\lambda_k} \mu_k W = \sum_k \tilde{\mu}_k \tilde{\mu}_k^T = I_k \label{OrthTensor}
\end{align}
Therefore $\tilde{\mu}_k$ form an orthonormal basis for $\mathbb{R}^k$.  Let 
\begin{align*}
	\tilde{T} &= T(W, W, W) \\
	&= \sum_k \lambda_k (W^T \mu_k) \otimes (W^T \mu_k) \otimes (W^T \mu_k) \\
	&= \sum_k \lambda_k^{-\frac{1}{2}} \tilde{\mu}_k \otimes \tilde{\mu}_k \otimes \tilde{\mu}_k 
	\numberthis \label{OrthTensor2}
\end{align*}
We say $\tilde{T}$ is an orthogonal tensor of rank $k$.  

\subsection{Tensors and Mixture Models}
With the notation for tensors established, we return to moments of mixture models under our assumptions. 

Let 
\begin{equation}
P(\vect{d}) = \sum_{k = 1}^K \alpha_k P_k(\vect{d})
\end{equation}
$\vect{d} \in \R^n$, $k \le n$.  We will denote data vectors $\vect{d}$ drawn independently from the same
conditional distribution $P_k$ with superscripts.  For example, $\{\vect{d}^1,  \vect{d}^2, \vect{d}^3\}$ denotes a triple drawn from \emph{one} of
$\{P_1, \ldots, P_k\}$.  To emphasize the triplet input model, we point out that while the marginal distribution of any of $\{\vect{d}^1,  \vect{d}^2, \vect{d}^3\}$ is
\begin{equation}
P(\vect{d^i}) = \sum_{k = 1}^K \alpha_k P_k(\vect{d^i})
\end{equation}
the joint distribution of $\{\vect{d}^1,  \vect{d}^2, \vect{d}^3\}$ is \emph{not} the product
of these marginal distributions.  For the following equations, all expectations containing superscripts are taken with respect to the triplet distribution, and all equations without are taken with respect to the marginal distribution, or independent products of it depending on context.  Let 
\begin{equation}
E_{P_k}[\vect{d}] = \vect{d}_k
\end{equation}
Then, by the conditional independence of $\vect{d}_1, \vect{d}_2, \vect{d}_3$ 
\begin{align*} 
E[\vect{d}] &= \sum_k \alpha_k \vect{d}_k \\
E[\vect{d}^1 \otimes \vect{d}^2] &= \sum_k \alpha_k \vect{d}_k \otimes \vect{d}_k \\
E[\vect{d}^1 \otimes \vect{d}^2 \otimes \vect{d}^3] &= \sum_k \alpha_k \vect{d}_k \otimes \vect{d}_k \otimes \vect{d}_k
\end{align*}

These estimators are in the spirit of classical method of moment estimators.  Classical 
method of moment estimators try to write the parameter to be estimated as a function of
moments of the distribution.  The moments are then plugged into the resulting equations.
Here, a decomposition of a moment tensor is used as an estimator for the desired parameters.

To give an indication of the importance of the multi-view assumption, we note that with only access to vectors drawn independently from the full distribution, we would be restricted to moments like the following:
\begin{align*}
E[\vect{d} \otimes \vect{d}] &= \sum_k \alpha_k \vect{d} \otimes \vect{d} + D
\end{align*}
where $D_{ii} = \sum_k \alpha_k (E_{P_k}[d_{ki}^2] - E_{P_k}[d_{ki}]^2)$

The diagonal matrix $D$ ensures that the moment matrix is not low rank.  A similar phenomenon occurs for the third-order tensor.  For some classes of mixture distributions, low-rank moment
tensors can be constructed even without multiple samples from  the mixture components.  However,
these methods require specific structure to the mixture components, and do not generalize to all mixture distributions.  

Classial methods for fitting mixture models, like EM tend not to have formal guarentees of convergence to a global optimum.  In general, optimum fitting of mixture models is believed to 
be quite hard under many circumstances \cite{arora2005learning}.  The two assumptions we require,
that the mixture means span a low rank subspace, and that we have access to three samples known to come from the same latent class, allow us to skirt these difficulties.

When this structure exists, the approach of \cite{anandkumar2012tensor}  is to try to find a low rank decomposition of these tensors.  Unfortunately,
storing and then decomposing these tensors is not an option under our biological restrictions.   We now turn to the most significant technical contribution of this article: a biologically plausible \emph{online} learning algorithm for learning selectivity to individual mixture components 
under a mixture model.  Typical proofs of convergence for tensor mixture methods tend to first use a central limit argument to show convergence of the moments.  Then, they show that for an orthogonal tensor with small errors, the errors in the orthogonal decomposition will also be small.  We do not explicitly compute these moments, and instead show that our online algorithm will converge with probability one through a stochastic optimization argument.  

We show that not only can selectivity to mixture be learned, but that the algorithm also provides a new interpretation for 
sequences of action potentials: disjoint spiking intervals provide multiple views of a distribution.

\section{Introduction to BCM}
 
The original formulation of the BCM rule is as follows: Let $c$ be the post-synaptic firing rate,
$\vect{d}\in{\R^N}$ be the vector of presynaptic firing rates, and $\vect{m}$ be the vector of synaptic
weights.  Then the BCM synaptic modification rule is
\begin{align*}
c &= \dotprod{\vect{m}}{\vect{d}} \\
\dot{\vect{m}} &= \phi(c, \theta)\vect{d}
\end{align*}
$\phi$ is a non-linear function of the firing rate, and $\theta$ is a sliding threshold that increases as a superlinear function of the average firing rate. 

There are many different formulations of the BCM rule.  The primary 
features that are required are :
\begin{enumerate}
	\item $\phi(c, \theta)$ is convex in $c$
	\item $\phi(0,\theta) = 0$
	\item $\phi(\theta, \theta) = 0$
	\item $\theta$ is a super-linear function of $E[c]$ 
\end{enumerate}
\begin{figure}[ht]
	\centering
	\includegraphics[width=.75\textwidth]{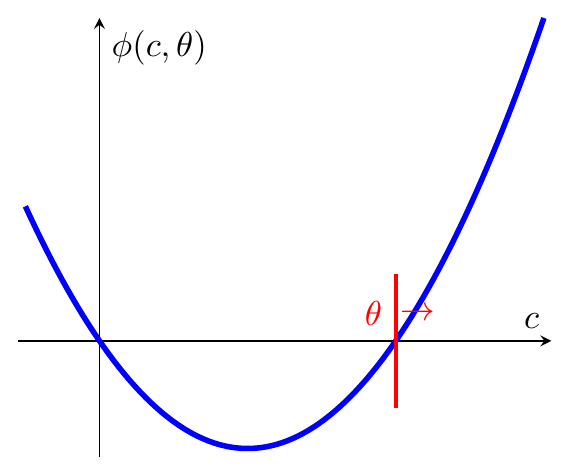}
	\caption[Sliding threshold for BCM rule]{BCM rule.  $\theta$ is a sliding threshold
	which is superlinear in $c$.}
\end{figure}
These properties guarantee that the BCM learning rule will not grow without bound.  There
have been many variants of this rule.  One of the most theoretically well analyzed 
variants is the Intrator and Cooper model \cite{intrator1992objective}, which
 has the following form for 
$\phi$ and $\theta$.
\begin{align*}
	\phi(c, \theta) &= c(c-\theta) \text{ with }	\theta = E[c^2]
\end{align*}

\begin{defin}[BCM Update Rule] For the purpose of this article, the BCM rule is defined as
\begin{equation}
\vect m_n = \vect m_{n-1} + \gamma_n c_n(c_n - \theta_{n-1})\vect{d}_n \label{BCM Update Rule}
\end{equation}

where $c_n = \dotprod{\vect m_{n-1}}{\vect d_n}$ and $\theta = E[c^2]$.  $\gamma_n$ is a sequence
of positive step sizes with the property that $\sum_n \gamma \rightarrow \infty$ and 
$\sum_n \gamma_n^2 < \infty$
\end{defin}

The traditional application of this rule is a system where the input $\vect{d}$ is drawn from 
linearly independent vectors
$\{\vect{d}_1, ..., \vect{d}_K\}$ with probabilities $\alpha_1, ..., \alpha_K$, with $K = N$, the dimension of the space.  

These choices are quite convenient because they lead to the following objective function 
formulation of the synaptic update rule.

\[R(\vect m) = \frac{1}{3} \E{\dotprod{\vect{m}}{\vect{d}}^3} -
\frac{1}{4} \E{\dotprod{\vect{m}}{\vect{d}}^2}^2\]

Thus,
\begin{align*}
\nabla R &= \E{\dotprod{\vect{m}}{\vect{d}}^2 \vect{d} - E[\dotprod{\vect{m}}{\vect{d}}^2] \dotprod{\vect{m}}{\vect{d}}\vect{d}} \\
&= c(c - \theta)\vect d \\
&= \phi(c, \theta) \vect d
\end{align*}

So in expectation, the BCM rule performs a stochastic gradient ascent in $R(\vect{m})$.  

With this model, we observe that the objective function can be rewritten in tensor notation.  Note that this input model can be seen as a kind of degenerate 
mixture model.

This objective function can be written as a tensor objective function, by noting the 
following:  
\begin{align}
	T &= \sum_k \alpha_k \vect{d}_k \otimes \vect{d}_k \otimes \vect{d}_k \nonumber\\
	M &= \sum_k \alpha_k \vect{d}_k \otimes \vect{d}_k \nonumber\\
	R(m) &= \frac{1}{3} T(\vect{m}, \vect{m}, \vect{m}) - \frac{1}{4}M(\vect{m}, \vect{m})^2 \label{BCM Objective Function}
\end{align}
Building off of the work of \cite{anandkumar2012tensor} we will use this characterization
of the objective function to build a triplet BCM update rule which will converge for 
general mixtures, not just degenerate ones.  

For completeness, we present a proof that the stable points of the expected BCM
update are selective for only one of the data vectors.

The stable points of the expected update occur when $E[\dot{\vect{m}}] = 0$.  Let 
$c_i = \dotprod{\vect{m}}{\vect{d}_i}$ and $\phi_i = \phi(c_i, \theta)$.  
Let $\vect{c} = [c_1, \ldots, c_K]^T$ and $\Phi = [\phi_1, \ldots, \phi_K]^T$.  
\begin{align*}
D^T &= \begin{bmatrix}
	\vect{d}_1 & | & \cdots & |  & \vect{d}_k
	\end{bmatrix} \\
P &= \diag{\vect{\alpha}}
\end{align*}
\begin{thm} \label{BCMStablePoints}
	(Intrator 1992) Let $K = N$, linearly independent $\vect d_k$, and let  $\alpha_i>0$ and distinct.   Then stable points (in the sense of Lyapunov) of the expected update 
	$\dot{\vect{m}} = \nabla R $ occur when $\vect{c} = \alpha_i^{-1} e_i$ or 
	$\vect{m} = \alpha_i^{-1} D^{-1} e_i$ 
\end{thm}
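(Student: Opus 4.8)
The plan is to exploit the linear change of variables $\vect{c}=D\vect{m}$, which is invertible since the $\vect{d}_k$ are linearly independent and $K=N$, in order to reduce the question to the sign-definiteness of a Hessian of a scalar function of $\vect{c}\in\R^K$. Using $T(\vect{m},\vect{m},\vect{m})=\sum_k\alpha_k c_k^3$ and $M(\vect{m},\vect{m})=\sum_k\alpha_k c_k^2$, the objective \eqref{BCM Objective Function} becomes
\[
R=\tfrac13\sum_k\alpha_k c_k^3-\tfrac14\Big(\sum_k\alpha_k c_k^2\Big)^2,
\]
whose partial derivatives $\partial R/\partial c_i=\alpha_i c_i(c_i-\theta)$ with $\theta=\sum_k\alpha_k c_k^2$ reproduce the stated gradient. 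Since $R(\vect{m})$ is $R$ composed with the linear map $D$, the Hessian in $\vect{m}$ equals $D^{T}(\nabla^2_{\vect{c}}R)\,D$, which is congruent to $\nabla^2_{\vect{c}}R$ and therefore has the same inertia. Consequently I can decide stability of $\dot{\vect{m}}=\nabla R$ entirely from the definiteness of $\nabla^2_{\vect{c}}R$ at each equilibrium, avoiding any explicit inversion of $D$.

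Next I would locate all equilibria. Because $\alpha_i>0$, the gradient vanishes exactly when every coordinate satisfies $c_k\in\{0,\theta\}$; writing $S=\{k:c_k=\theta\}$ for the active set and imposing self-consistency $\theta=\theta^2\sum_{k\in S}\alpha_k$ gives either $\vect{c}=\vect 0$ (when $S=\emptyset$) or $\theta=\big(\sum_{k\in S}\alpha_k\big)^{-1}$ with $c_k=\theta$ on $S$ and $c_k=0$ off $S$. The single-index equilibria $S=\{i\}$ then force $\theta=\alpha_i^{-1}$, so $\vect{c}=\alpha_i^{-1}e_i$ and hence $\vect{m}=\alpha_i^{-1}D^{-1}e_i$, matching the claim. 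I would then record the Hessian once, namely $[\nabla^2_{\vect{c}}R]_{ii}=2\alpha_i c_i-2\alpha_i^2 c_i^2-\alpha_i\theta$ and $[\nabla^2_{\vect{c}}R]_{ij}=-2\alpha_i\alpha_j c_ic_j$ for $i\neq j$, and evaluate it at each type of equilibrium.

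At $S=\{i\}$ only $c_i\neq0$, so all off-diagonal entries vanish and the Hessian is diagonal with active entry $-1$ and inactive entries $-\alpha_j/\alpha_i<0$; it is negative definite, so these points are asymptotically (hence Lyapunov) stable. For $|S|\ge2$, I would restrict to the $2\times2$ principal block indexed by two active coordinates $i,j$ and set $a=\alpha_i\theta$, $b=\alpha_j\theta$; the normalization $\sum_{k\in S}\alpha_k\theta=1$ forces $a+b\le 1$, which makes negative-definiteness of the block (which would require $a,b>\tfrac12$) impossible. Thus the block, and with it the full Hessian, is indefinite, so every equilibrium with two or more active components is unstable. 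The distinctness of the $\alpha_i$ is used here to separate the stable equilibria and to eliminate the knife-edge $a=b=\tfrac12$.

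The one configuration the linearization cannot resolve, and the main obstacle, is the origin $S=\emptyset$, where the Hessian is identically zero. Here I would abandon linearization and argue directly from the flow: along the ray $\vect{c}=\varepsilon e_i$ one computes $\partial R/\partial c_i=\alpha_i\varepsilon^2(1-\alpha_i\varepsilon)>0$ for all small $\varepsilon>0$, so the ascent dynamics push away from $\vect 0$ while $R$ strictly increases, which precludes stability. Collecting the three cases shows that the single-index equilibria $\vect{c}=\alpha_i^{-1}e_i$, equivalently $\vect{m}=\alpha_i^{-1}D^{-1}e_i$, are the only stable points, as asserted.
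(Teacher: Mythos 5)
Your proof is correct and follows essentially the same route as the paper's: pass to the coordinates $\vect{c} = D\vect{m}$, classify equilibria by their active set via the self-consistency condition $\theta = \left(\sum_{k \in S}\alpha_k\right)^{-1}$, and decide stability from the inertia of the congruent Hessian $D^T(\nabla^2_{\vect{c}}R)D$, ruling out every $|S|\ge 2$ equilibrium by exhibiting a positive diagonal entry at the minimal $\alpha$ (exactly the paper's $\beta_{S_+}\alpha_j < \tfrac12$ step). The one place you go beyond the paper is the equilibrium $\vect{c}=\vect{0}$, where the Hessian vanishes so a negative-semidefiniteness test is inconclusive; your direct argument that $R$ is positive on nearby rays and nondecreasing along the ascent flow correctly excludes it, closing a case the paper's proof passes over.
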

\begin{proof}
$E[\dot{\vect{m}}] = D^T P \Phi$ which is 0 only when $\Phi = 0$.  Note 
$\theta = \sum_k \alpha_k c_k^2$.  $\phi_i = 0$ if $c_i = 0$ or $c_i = \theta$.  
Let $S_+ = \{i:c_i \ne 0\}$, and $S_- = \{i:c_i = 0\}$.  Then for all $i \in S_+$, $c_i = \beta_{S_+}$

\begin{align*}
\beta_{S_+} - \beta_{S_+}^2\sum_{i \in S_+} \alpha_i &= 0 \\
\beta_{S_+} &= \left(\sum_{i \in S_+} \alpha_i \right)^{-1} \\
\end{align*}

Therefore the solutions of the BCM learning rule are $c = \1_{S_+} \beta_{S_+}$, for all
subsets $S_+ \subset \{1, \ldots, K\}$.  We
now need to check which solutions are stable.  The stable points (in the sense of Lyapunov) 
are points where the matrix 
\begin{equation*}
	H = \frac{\partial E[\dot{\vect{m}}]}{\partial \vect{m}} 
\end{equation*}
is negative semidefinite.  

\begin{align}
H &= D^T P \left( \frac{\partial \Phi}{\partial \vect{c}}\right) \frac{\partial \vect{c}}{\partial \vect{m}} \nonumber \\
&= D^T P \left( \frac{\partial \Phi}{\partial \vect{c}}\right) D \label{stability}
\end{align}
Let $S$ be an index set $S \subset \{1, \ldots, n\}$.  We will use the following notation
for the diagonal matrix $I_S$:
\begin{align}
(I_S)_{ii} = \begin{cases} 1 & i\in S \\ 0 & i \notin S \end{cases} \label{Definition of IS}
\end{align}
So $I_{S} + I_{S^c} = I$, and $\vect e_i \vect e_i^T = I_{\{i\}}$

a quick calculation shows 
\begin{align*}
\left( \frac{\partial \phi_i}{\partial c_j} \right) &= \beta_{S_+}I_{S_+} - \beta_{S_+} I_{S_-} -2 \beta_{S_+}^2 \diag{\vect{\alpha}}\1_{S_+}\1_{S_+}^T
\end{align*}
This is negative semidefinite iff $A = I_{S_+} - 2 \beta_{S_+} \diag{\vect{\alpha}} \1_{S_+}\1_{S_+}^T$ is negative semidefinite.  

Assuming a non-degeneracy of the probabilities $\vect{\alpha}$, and assume $|S_+| > 1$.  Let $j = \arg \min_{i \in S_+} \alpha_i$.  Then $\beta_{S_+} \alpha_j < \frac{1}{2}$ so $A$ is not negative semi-definite.  However, if  $|S_+| = 1$ then
$A = -I_{S_+}$ so the stable points occur when $\vect c = \frac{1}{\alpha_i} \vect e_i$
\end{proof}
Each stable state is selective for only one $\vect{d}_i$.  We now show a connection 
between this objective function, and another objective function from the tensor
decomposition literature.  

\section{Connection Between BCM and a Generalized Eigen-Tensor Decomposition}
We now briefly show a connection between a BCM-like learning rule and a generalized eigen-tensor decomposition.  We will show
that the stable points of this learning rule are, up to a constant power of the weights,
identical to the stable points of the BCM rule.  

Rather than a sliding threshold that penalizes the activity of the neuron, we modify the BCM
neuron with a sliding threshold that drives the expected activity of the neuron to a \emph{specified} 
activity level.  This will allow us to rewrite the objective function of the neuron as a generalized tensor spectral decomposition.  

Let \begin{align*}
	\hat{R}(\vect{m}, r) &= \frac{1}{3}T(\vect{m}, \vect{m}, \vect{m}) + \frac{r}{3} \left(1 - 
	M(\vect{m}, \vect{m})^2 \right)
\end{align*}

Let $W$ be defined as in equation \ref{OrthTensor}.  Let $\vect{m} = W\vect{u}$.  Then 
\begin{align}
\hat{R}(W\vect{u}, r) &= \frac{1}{3}T(W \vect{u}, W \vect{u}, W \vect{u}) + \frac{r}{3} M(\vect{m}, \vect{m})^2 \nonumber\\
&= \frac{1}{3} \tilde{T}(\vect{u},\vect{u},\vect{u}) + \frac{r}{3} (1 - \dotprod{\vect{u}}{\vect{u}}^2) \label{u definition}
\end{align}
where $\tilde{T}$ is defined as in equation \eqref{OrthTensor2}.  We note that $\tilde{T}$ is an orthogonal tensor.  This equation the Lagrange multiplier formulation of a generalized tensor spectral expansion for an orthogonal tensor
with $r$ as a Lagrange multiplier.
The analogy with the general eigenvector expansion is as follows:  The first eigenvector
of a symmetric matrix $M$ is the solution to the following objective function:
\[\max_{\dotprod{\vect{u}}{\vect{u} }^2 = 1}M(\vect{u},\vect{u}) \] 
Our objective function attempts to find \[\max_{\dotprod{\vect{u}}{\vect{u} }^2 = 1}T(\vect{u},\vect{u}, \vect{u})\]
Unlike the symmetric matrix case, which has a single local maximum (assuming no degeneracy in the eigenvalues), the tensor objective function has many local
maxima.  For the matrix case, one can find additional eigenvectors by \emph{deflating} the 
matrix.  The process works by looking at a sequence of matrices created by 
successively subtracting out the low-rank matrix approximations generated by the eigenvectors, then repeating.  
\[M' = M - \beta \vect{u}\vect{u}^T \text{  where  }M\vect{u} = \beta \vect{u}\]
A similar approach can work with the orthogonal tensors.  However for orthogonal tensors, the local optima of the tensor objective function correspond
to global optima of some stage of the deflation process.  We do not need to explicitly deflate
the tensor to find its decomposition, we just need to ensure that each version of our gradient ascent ends at
a different local maximum.  A parallel algorithm using a network of neurons which can perform
this simultaneous search is presented in section \ref{Network}

With this objective function, the expected update rule becomes

\begin{align}
E[\dot{\vect{m}}] &= E[ \nabla R(\vect{m}, r)] \nonumber\\
&= E[\hat{\phi}(c, r\theta)\vect{d}] \nonumber \\
E[\dot{r}] &= -(1 - \theta^2) \label{eigentensor}
\end{align}
where $\hat{\phi} = c(c - \frac{4}{3}r \theta)$

\begin{defin}[Tensor BCM] The Tensor BCM learning rule is given by
\begin{align}
\vect{m}_n &= \vect{m}_{n-1} + \gamma_n \hat{\phi}(c_n, \theta_{n-1}r)\vect d_n \label{Tensor BCM} \\
r_n &= r_{n-1} - \gamma'_n (1 - \theta_{n-1}^2)
\end{align}

\end{defin}
\begin{thm}
	The constrained local maxima of \eqref{eigentensor} are \[\vect{m} = \lambda_k^{\frac{1}{2}} M^{-1}\mu_k\] \label{ConstrainedAlgorithm}
\end{thm}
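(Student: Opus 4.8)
The plan is to reduce everything to the orthogonal‑tensor coordinates of \eqref{u definition} and then run a second‑order argument exactly parallel to the stability analysis in Theorem \ref{BCMStablePoints}. First I would read off the constraint that the $r$‑dynamics enforce: setting $\E{\dot r} = 0$ in \eqref{eigentensor} forces $\theta^2 = 1$, and since $\theta = M(\vect m, \vect m) = \dotprod{\vect u}{\vect u}$ by $M(W,W) = I_k$ from \eqref{OrthTensor}, the admissible equilibria lie on the sphere $\dotprod{\vect u}{\vect u} = 1$. On this set the substitution $\vect m = W\vect u$ turns $\hat R$ into $\tfrac13\tilde T(\vect u,\vect u,\vect u) + \tfrac r3\bigl(1 - \dotprod{\vect u}{\vect u}^2\bigr)$ with $\tilde T = \sum_k \lambda_k^{-1/2}\tilde\mu_k \otimes \tilde\mu_k \otimes \tilde\mu_k$ orthogonal, so $r$ plays the role of a Lagrange multiplier for maximizing the cubic form over the sphere.

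Next I would expand in the orthonormal basis $\{\tilde\mu_k\}$, writing $u_k = \dotprod{\tilde\mu_k}{\vect u}$. Stationarity $\nabla_{\vect u}\hat R = 0$ reads $\sum_k \lambda_k^{-1/2}\dotprod{\tilde\mu_k}{\vect u}^2\,\tilde\mu_k = \tfrac{4r}{3}\vect u$ on the constraint sphere, which decouples componentwise into $\lambda_k^{-1/2}u_k^2 = \tfrac{4r}{3}u_k$. Hence each coordinate satisfies $u_k \in \{0,\ \tfrac{4r}{3}\lambda_k^{1/2}\}$, so every critical point is supported on some index set $S$ with $u_k = \tfrac{4r}{3}\lambda_k^{1/2}$ for $k\in S$, subject to $\sum_{k\in S}u_k^2 = 1$. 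This is the direct analogue of the $c_i \in \{0, \beta_{S_+}\}$ dichotomy in the proof of Theorem \ref{BCMStablePoints}.

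The main work, as in the classical case, is deciding which critical points are genuine local maxima. I would form the Lagrangian $L = \tfrac13\sum_k \lambda_k^{-1/2}u_k^3 - \tfrac{\mu'}{2}\bigl(\sum_k u_k^2 - 1\bigr)$ with $\mu' = \tfrac{4r}{3}$; its Hessian is diagonal with entries $2\lambda_k^{-1/2}u_k - \mu'$, which equal $+\mu'$ for $k\in S$ and $-\mu'$ for $k\notin S$ at a critical point. For a constrained maximum this must be negative semidefinite on the tangent space $\{\vect v : \sum_{k\in S}u_k v_k = 0\}$. If $|S|\ge 2$, choosing $\vect v$ supported on two indices $i,j\in S$ with $u_i v_i + u_j v_j = 0$ gives curvature $\mu'(v_i^2 + v_j^2) > 0$, so the point is a saddle; if $|S| = 1$ the constraint forces $u_i = +1$ and $\mu' = \lambda_i^{-1/2} > 0$, and the restricted Hessian equals $-\mu' I$ on the orthogonal complement of $\tilde\mu_i$, hence is negative definite. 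Thus the local maxima are exactly $\vect u = \tilde\mu_i$, and the sign of $r$ is automatically pinned so that the cubic is being maximized rather than minimized.

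Finally I would transform back. Using $\tilde\mu_i = \sqrt{\lambda_i}\,W^T\mu_i$ and $WW^T = UD^{-1}U^T$, which acts as $M^{-1}$ on $\mspan\{\mu_k\}$ (the range of $M$, to which every $\mu_i$ belongs), I get $\vect m = W\vect u = \sqrt{\lambda_i}\,WW^T\mu_i = \lambda_i^{1/2}M^{-1}\mu_i$, the claimed form. The only real obstacle is the second‑order step: one must be careful that the relevant object is the Hessian of the Lagrangian restricted to the constraint tangent space rather than the ambient Hessian, and that the $|S|\ge 2$ saddle directions are taken tangent to the sphere so that the positive‑curvature conclusion is valid.
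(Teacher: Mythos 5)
The paper states Theorem \ref{ConstrainedAlgorithm} without supplying a proof, so there is nothing to compare your argument against directly; your route --- whitening via $W$, reducing to the orthogonal tensor $\tilde T$, classifying critical points by their support set $S$, and testing the Lagrangian Hessian on the constraint tangent space --- is the natural analogue of the paper's proof of Theorem \ref{BCMStablePoints} and is the right skeleton. The first-order analysis ($u_k \in \{0, \tfrac{4r}{3}\lambda_k^{1/2}\}$), the saddle argument for $|S|\ge 2$ when $\mu'>0$, and the back-substitution $\vect m = W\tilde\mu_i = \lambda_i^{1/2}M^{-1}\mu_i$ (with $M^{-1}$ read as the pseudoinverse on $\mspan\{\mu_k\}$ when $K<n$) are all correct.

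The gap is the final sentence of your second-order step: \emph{the sign of $r$ is not pinned by the first-order conditions}, and the negative branch cannot be waved away. From $u_k = \mu'\lambda_k^{1/2}$ on $S$ and $\sum_{k\in S}u_k^2=1$ one gets $\mu' = \pm\bigl(\sum_{k\in S}\lambda_k\bigr)^{-1/2}$, and both signs yield admissible critical points. For $\mu'<0$ your own Hessian formula gives diagonal entries $+\mu'<0$ on $S$ and $-\mu'>0$ off $S$; when $S$ is the \emph{full} index set there are no off-$S$ directions, the restricted Hessian is $\mu' I \prec 0$, and the all-negative point $u_k = \mu'\lambda_k^{1/2}$ is a genuine constrained local maximum of $\tilde T(\vect u,\vect u,\vect u)$ on the sphere (already visible for $K=2$, $\lambda_1=\lambda_2$: $u_1^3+u_2^3$ has a local max at $(-\tfrac{1}{\sqrt2},-\tfrac{1}{\sqrt2})$). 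So as a statement about constrained optimization your classification, and the theorem read literally, omit one spurious maximum. The clean way to close this is to remember that $r$ is a dynamical variable in \eqref{eigentensor}, not a passive multiplier: linearizing the coupled $(\vect u, r)$ system at a critical point, the subspace $u^\perp$ is governed by $\diag{2\lambda_j^{-1/2}u_j - \mu'}$ while the radial--$r$ plane is governed by the block $\bigl(\begin{smallmatrix} -\mu' & -4/3 \\ 4 & 0\end{smallmatrix}\bigr)$, since $\vect u^T \nabla^2 \hat R\, \vect u = -\mu'$ at every critical point. This block has eigenvalues with real part of the sign of $-\mu'$, so every $\mu'<0$ critical point --- including the all-negative local maximum --- is dynamically unstable, and the surviving stable equilibria are exactly $|S|=1$ with $u_i=+1$, i.e.\ $\vect m = \lambda_i^{1/2}M^{-1}\mu_i$. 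Adding that half page would make your proof complete.
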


Thus the modified BCM neuron learns decorrelated versions of the parameter vectors $\mu_k$.  In 
contrast with the ordinary matrix (2-tensor) eigendecomposition, this update function can converge to each of the eigenvectors of the 3-tensor, rather than just the one corresponding to the largest eigenvalue.  

\section{Triplet BCM Learns Selectivity to Components of Mixture Models} \label{Triplet BCM Chapter}
We have seen in the previous section that the classical BCM rule can be written as 
stochastic gradient ascent in a tensor objective function, provided the input consists of $N$ discrete vectors, where $N$ is the dimension of the input data.  We demonstrate 
that, under a multi-view and low rank assumption, this rule can be modified to learn selectivity for
mixture means under a broad variety of mixture models.  A neuron modifying its synaptic
selectivity under this rule would have positive expectation of firing for only one 
mixture component.  We will call this learning rule triplet BCM, as it requires access to 
triplets of data from each mixture distribution.  

First, we will describe the triplet BCM learning
rule, and show that the expected update of this rule with mixture model input will converge to a state which is 
selective for one and only one mixture.  Second, we will show that for a variety of update step sizes, this algorithm will converge w.p. 1 to the stable states of the expected
update.  Finally, we will show how to combine a network of these triplet BCM neurons to perform 
a parallel search of the mixture components, ensuring that each neuron is selective for one 
mixture component.  

\section{Triplet BCM Rule}
We now show that by modifying the update rule to incorporate information from triplets of 
input vectors, the generality of the input data can be dramatically increased.  Assume that 
\[P(\vect{d}) = \sum_k \alpha_k P_k(\vect{d})\]
where $E_{P_k}[\vect{d}] = \vect{d}_k$.  For example, the data could be a mixture of axis-aligned Gaussians, a mixture of independent Poisson 
variables, or mixtures of independent Bernoulli random variables to name a few.  We also require $E_{P_k}[\|\vect{d}\|^2] < \infty$.  We emphasize that 
we do not require our data to come from any parametric distribution.  

We interpret $k$ to be a latent variable that signals the hidden cause of the 
underlying input distribution, with distribution $P_k$.  Critically, we assume that the hidden 
variable $k$ changes slowly compared to the inter-spike period of the neuron.  In particular, 
we need at least 3 samples from each $P_k$.  This corresponds to the multi-view assumption of \cite{anandkumar2012tensor}.  A particularly relevant model meeting this assumption is that of spike counts in disjoint 
intervals under a Poisson process, with a discrete, time varying rate parameter.  

Let $\{\vect{d}_k^1, \vect{d}_k^2, \vect{d}_k^3\}$ be a triplet of independent copies from some $P_k(\vect{d})$, 
i.e. each are drawn from the same latent class.  It is critical to note that if $\{\vect{d}_k^1, \vect{d}_k^2, \vect{d}_k^3\}$ are not drawn from the same class, this update will not converge to the global maximum.  It appears that this assumption can be violated somewhat in practice with limited change to the fixed point of the algorithm, however we will not explore this further in this article.  Our sample is then a sequence of triplets, each triplet drawn from the same latent distribution.  Let $c_k^i = \dotprod{\vect{d}^i}{\vect{m}}$.  With these independent triples, we note that the tensors $T$ and $M$ from equation \eqref{BCM Objective Function} can be written as moments of the independent triplets
\begin{align*}
	T &= E[\vect d^1 \otimes \vect d^2 \otimes \vect d^3]\\
	M &= E[\vect d^1 \otimes \vect d^2] \\
	R(m) &= \frac{1}{3} T(\vect{m}, \vect{m}, \vect{m}) - \frac{1}{4}M(\vect{m}, \vect{m})^2
\end{align*}
As with classical BCM, we can perform gradient ascent in this objective function which leads 
to the expected update

\[E[\nabla R] = E[ c^1 c^2 \vect d^3 + (c^1 \vect d^2 + c^2 \vect d^1)(c^3 -2 \theta)]\]
where $\theta = E[c^1c^2]$.  This update is rather complicated, and couples pre and post
synaptic firing rates across multiple time intervals.  Since each $c^i$ and $\vect d^i$
are identically distributed, this expectation is equal to 
\begin{equation*}
E[c^2(c^3 - \theta)\vect d^1] \label{Triplet BCM Update}
\end{equation*}
which suggests a much simpler update.  This ordering was chosen to  match the 
spike timing dependency of synaptic modification.
\begin{defin}[Full-rank Triplet BCM]

We define the full-rank Triplet BCM update rule as:
\begin{equation}
\vect m_n = \pi(\vect m_{n-1} + \gamma_n \phi(c^2, c^3, \theta_{n-1}) \vect d^1 )\label{Full Rank Triplet BCM}
\end{equation}
where $\phi(c^2, c^3, \theta) = c^2(c^3 - \theta)$, $\sum_n \gamma_n \rightarrow \infty$, and
$\sum_n \gamma_n^2 < \infty$.  $\pi$ is a projection into the set $B_r := \{\vect{m}: \vect{m}^T M \vect{m} < r\}$ for a very large $r$.  See subsection \ref{Full Rank proof} for details on the projection.  
\end{defin}
\begin{defin}[Low-rank Triplet BCM]

The low-rank Triplet BCM update rule is:
\begin{equation}
\vect m_n = \pi((1 - \delta_n)\vect m_{n-1} + \gamma_n \phi(c^1, c^3, \theta_{n-1}) \vect d^2) \label{Low Rank Triplet BCM}
\end{equation}
where 
\begin{align*}\phi(c^2, c^3, \theta) &= c^2(c^3 - \theta)\\
\sum_n \gamma_n \rightarrow \infty  \text { \hspace{2ex} }& \text{ \hspace{2ex}} \sum_n \delta_n \gamma_n \rightarrow \infty\\
 \sum_n \delta_n^2 \gamma_n^2 < \infty \text{ \hspace{2ex}}&\text{ \hspace{2ex}} \sum_n \gamma_n^2 < \infty\\
 \delta_n &\rightarrow 0 \\
\end{align*}
$\pi$ is a projection onto the ball $B_r := \{\vect{m}: \vect{m}^T \hat{M} \vect{m} < r\}$ where $\hat{M} = M + \proj_{W^c}$, and $r$ is large.  See subsection \ref{Low Rank Proof} for details on the projection.
\end{defin}
 In practice, for 
a sufficiently small step size the projections rarely occur, and can be made arbitrarily infrequent through a sufficiently large choice of $r$.    	
\begin{thm} \label{Expected Update Triplet BCM}
Let $K = N$, and $\{\vect d_n^1, \vect d_n^2, \vect d_n^3 \}$ be multi-view triplets drawn from
a mixture model $P(\vect d) = \sum_n^K P_k(\vect d) $ with linearly independent means and bounded variance. 
With multi-view triplets drawn from a mixture model, the full-rank triplet BCM rule
will converge w.p. 1 and 
\[E[\dotprod{\vect m}{\vect d}] = \alpha_i^{-1} \vect{e}_i\]
\end{thm}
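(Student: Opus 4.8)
The plan is to treat the full-rank triplet BCM update \eqref{Full Rank Triplet BCM} as a projected stochastic approximation scheme and invoke the ODE method (in the Kushner--Clark form) to obtain convergence w.p.\ 1 to a stable equilibrium of the associated mean ODE. First I would rewrite the update as
\[
\vect m_n = \pi\bigl(\vect m_{n-1} + \gamma_n\bigl(h(\vect m_{n-1}) + \xi_n\bigr)\bigr),
\]
where $h(\vect m) = E[c^2(c^3-\theta)\vect d^1] = \nabla R(\vect m)$ is the mean field identified in the discussion preceding \eqref{Full Rank Triplet BCM} (here $\theta = M(\vect m,\vect m)$ is slaved to the current weight), and $\xi_n = \phi(c^2,c^3,\theta)\vect d^1 - h(\vect m_{n-1})$ is a martingale-difference noise term with respect to the natural filtration. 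A short computation using the symmetry of the identically distributed triple confirms that $h$ is an \emph{exact} gradient, so the mean dynamics is the projected gradient flow $\dot{\vect m} = \nabla R(\vect m)$ already studied in Theorem \ref{BCMStablePoints}.

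The first block of work is to verify the standard regularity hypotheses. The step-size conditions $\sum_n\gamma_n=\infty$, $\sum_n\gamma_n^2<\infty$ are given, and $h$ is polynomial in $\vect m$, hence locally Lipschitz. The projection $\pi$ onto $B_r = \{\vect m: \vect m^T M\vect m<r\}$ confines the iterates to a compact set on which $R$ is bounded above, so boundedness of the trajectory is automatic. The point I would emphasize is that the conditional second moment of the noise is finite \emph{using only} the stated assumption $E_{P_k}[\|\vect d\|^2]<\infty$: conditioning on the latent class $k$ and using the conditional independence of $\vect d^1,\vect d^2,\vect d^3$ factors $E[(c^2)^2(c^3-\theta)^2\|\vect d^1\|^2\mid k]$ into a product of three separate second-order moments, each finite on the compact weight set. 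This factorization is exactly where the triplet/multi-view structure pays off, and it is what avoids the higher-moment assumptions a single-sample scheme would demand.

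With the hypotheses in place, Kushner--Clark gives that the interpolated trajectory converges a.s.\ to a connected, chain-recurrent, invariant set of the projected ODE. Since $\dot R = \|\nabla R\|^2\ge 0$ along the flow, $R$ is a strict Lyapunov function away from critical points, so the limit set is contained in the equilibria of $\dot{\vect m}=\nabla R$. These were enumerated in Theorem \ref{BCMStablePoints}: they are the selectivity states $\vect c = \bigl(\sum_{i\in S_+}\alpha_i\bigr)^{-1}\1_{S_+}$ indexed by subsets $S_+$, and the only Lyapunov-stable ones are the singletons $\vect c = \alpha_i^{-1}\vect e_i$, equivalently $\vect m = \alpha_i^{-1}D^{-1}\vect e_i$ since $K=N$ and the means are linearly independent. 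Translating back, the limiting neuron satisfies $E_{P_k}[\dotprod{\vect m}{\vect d}]=\alpha_i^{-1}\delta_{ik}$, which is the claimed selectivity.

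The last and hardest step is to rule out convergence to the non-selective equilibria (the subsets with $|S_+|>1$, and the origin), which the deterministic flow alone does not exclude. Here I would invoke the standard non-convergence-to-unstable-equilibria results for stochastic approximation (the trap-avoidance theorems of Pemantle and of Brandière--Duflo): each such equilibrium is a linearly unstable hyperbolic point of $\nabla R$ (the matrix $A$ in the proof of Theorem \ref{BCMStablePoints} fails to be negative semidefinite there), and provided the martingale noise $\xi_n$ excites every unstable direction with probability bounded away from zero near the point, the iterates escape any neighborhood of it a.s. The main technical burden is verifying this noise-excitation condition uniformly along the unstable subspaces, together with the bookkeeping needed to run the avoidance argument inside the projected scheme; the slaving of $\theta$ to $\vect m$ (or, equivalently, a two-timescale treatment of a separately estimated threshold) is a secondary complication that must be dispatched but does not alter the structure of the argument.
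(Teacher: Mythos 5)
Your proposal follows essentially the same strategy as the paper: decompose the projected update into the mean field $h(\vect m)=E[c^2(c^3-\theta)\vect d^1]=\nabla R(\vect m)$ plus a martingale-difference perturbation, use the projection onto $B_r$ to keep the iterates in a compact set, control the noise's conditional second moment via the conditional independence of the triple together with $E_{P_k}[\|\vect d\|^2]<\infty$, and run a Lyapunov/ODE-method argument with $R$ as the Lyapunov function, falling back on Theorem \ref{BCMStablePoints} for the classification of equilibria. The paper packages this through Delyon's theorem (Theorem \ref{Stochastic Convergence Proof}, with the ``A-stability'' and optional-projection conditions checked in subsection \ref{Full Rank proof}) rather than Kushner--Clark, but the content is the same. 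Where you genuinely diverge is your final step: you correctly observe that the ODE method only delivers convergence to the \emph{full} critical set of $\nabla R$ -- which contains the non-selective saddles $\vect c=\beta_{S_+}\1_{S_+}$ with $|S_+|>1$ -- and you propose Pemantle / Brandi\`ere--Duflo trap-avoidance to exclude them. The paper does not do this; its proof stops at $d(\vect m_n,\mathcal K)\to 0$ and merely remarks that the stability of the zeros was analyzed earlier, so on this point your argument is more complete than (and needed to fully justify) the stated conclusion. Two caveats on your added step: the origin is a \emph{degenerate} critical point (the Jacobian $D^TP(\partial\Phi/\partial\vect c)D$ vanishes at $\vect c=0$), so hyperbolicity-based avoidance theorems do not apply there directly and a separate argument is required; and in the projected scheme you should also rule out spurious limit points on $\partial B_r$ where the drift points outward -- this is exactly what Delyon's optional-projection condition $\dotprod{\nabla V(x)}{\pi(x)-x}<-\delta|\pi(x)-x|$ is checked for in the paper, and your write-up should verify the analogous condition for $r$ large.
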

\begin{proof}
	The expected update for triplet BCM under a mixture model is identical to classical BCM
	with discrete data.  Therefore the proof of Theorem \eqref{BCMStablePoints} for the
	stable points of the expected update goes through unchanged.  The proof of convergence
	w.p. 1 for the true update requires some additional machinery, which will be covered in section \ref{Stochastic Approximation}.
\end{proof}
\begin{thm}
Let $K \le n$, and $\{\vect d_n^1, \vect d_n^2, \vect d_n^3 \}$ be multi-view triplets drawn from
a mixture model $P(\vect d) = \sum_n^K P_k(\vect d) $ with linearly independent means. 
With multi-view triplets drawn from a mixture model, the low-rank triplet BCM rule
will converge w.p. 1 and
\[E[\dotprod{\vect m}{\vect d}] = \alpha_i^{-1} \vect{e}_i\]
\end{thm}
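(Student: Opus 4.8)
The plan is to split $\R^n$ into the subspace $\mathcal W = \mspan\{\vect d_1,\dots,\vect d_K\} = \operatorname{range}(M)$ carrying the mixture means and its orthogonal complement $\mathcal W^c = \operatorname{null}(M)$, to reduce the dynamics on $\mathcal W$ to the classical BCM analysis of Theorem \ref{BCMStablePoints}, and to use the leak term $(1-\delta_n)$ of \eqref{Low Rank Triplet BCM} to sweep the $\mathcal W^c$ component to zero. First I would compute the expected update. Conditioning on the latent class $k$ and using that $\vect d^1,\vect d^2,\vect d^3$ are conditionally independent with $E_{P_k}[\vect d]=\vect d_k$ gives
\[E[\phi(c^1,c^3,\theta)\vect d^2\mid k]=\big(E[c^1\mid k]\,E[c^3\mid k]-\theta\,E[c^1\mid k]\big)\vect d_k=c_k(c_k-\theta)\vect d_k,\]
with $c_k=\dotprod{\vect m}{\vect d_k}$ and $\theta=E[c^1c^2]=\sum_k\alpha_k c_k^2=M(\vect m,\vect m)$. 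Averaging over $k$ yields $E[\phi(c^1,c^3,\theta)\vect d^2]=\sum_k\alpha_k c_k(c_k-\theta)\vect d_k=D^T P\Phi$, which is exactly the expected classical BCM update. Two consequences follow: the expected drift lies entirely in $\mathcal W$, and its rest points and their stability on $\mathcal W$ are precisely those of Theorem \ref{BCMStablePoints}, namely the selective states $\vect c=\alpha_i^{-1}\vect e_i$ (stable) and the mixed states $\vect c=\1_{S_+}\beta_{S_+}$, $|S_+|>1$ (unstable). The modified norm $\hat M=M+\proj_{\mathcal W^c}$ is positive definite, so $B_r=\{\vect m:\vect m^T\hat M\vect m<r\}$ is a bounded set and the projection $\pi$ keeps the iterates, and hence the conditional noise variance, uniformly bounded.

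Second I would show $\vect m_n^{\mathcal W^c}\to 0$ almost surely. Projecting \eqref{Low Rank Triplet BCM} onto $\mathcal W^c$ and using that the expected drift has no $\mathcal W^c$ component, we get (off the projection boundary)
\[\vect m_n^{\mathcal W^c}=(1-\delta_n)\vect m_{n-1}^{\mathcal W^c}+\gamma_n\vect\xi_n,\qquad \vect\xi_n=\proj_{\mathcal W^c}\!\big(\phi(c^1,c^3,\theta_{n-1})\vect d^2\big),\]
where $\vect\xi_n$ is a martingale difference (mean zero given the past) with variance bounded on $B_r$. Setting $V_n=\norm{\vect m_n^{\mathcal W^c}}^2$ and taking conditional expectations kills the cross term and gives $E[V_n\mid\mathcal F_{n-1}]\le(1-\delta_n)^2V_{n-1}+C\gamma_n^2$. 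Dropping the contraction factor, $\sum_n\gamma_n^2<\infty$ together with the Robbins--Siegmund theorem makes $V_n$ converge almost surely to some limit; unrolling the recursion and applying Kronecker's lemma (using that the leak accumulates, so $\prod_j(1-\delta_j)\to0$, while the injected noise is summable) forces $E[V_n]\to0$, whence the almost sure limit is $0$. This is precisely the job of the extra step-size conditions in \eqref{Low Rank Triplet BCM}: the leak must accumulate enough to erase the null-space drift yet remain asymptotically subdominant to $\gamma_n$.

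Third I would establish convergence on $\mathcal W$ by the ODE (Kushner--Clark) method of Section \ref{Stochastic Approximation}, exactly as in the full-rank Theorem \ref{Expected Update Triplet BCM}. Because $\delta_n\to0$ and the leak is asymptotically negligible relative to the gradient step, the mean field of the $\mathcal W$-iterates is the classical BCM gradient, so the limiting ODE is the gradient flow $\dot{\vect m}=\nabla R$ on $\mathcal W$, whose only stable rest points are the selective states. Boundedness from $\pi$ and square-summability of the martingale noise supply the standard hypotheses, giving convergence of $\vect m_n^{\mathcal W}$ to the invariant set of the ODE; combined with $\vect m_n^{\mathcal W^c}\to0$ this drives the selectivity vector $\vect c$ to a stable state $\alpha_i^{-1}\vect e_i$, i.e.\ $E[\dotprod{\vect m}{\vect d}]=\alpha_i^{-1}\vect e_i$.

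I expect two points to be the real difficulty. The first is the two-timescale reconciliation: the leak schedule must be strong enough that $\sum_n\delta_n$ diverges (so the null-space component is erased) yet weak enough that $\delta_n=o(\gamma_n)$ (so the residual leak $-(\delta_n/\gamma_n)\vect m$ neither displaces the selective equilibria nor collapses $\vect m^{\mathcal W}$ toward $0$); verifying that a single step-size schedule meets both demands, and that the vanishing $\mathcal W^c$ component does not perturb the $\mathcal W$ limit, is the delicate bookkeeping. The second is saddle avoidance: since Theorem \ref{BCMStablePoints} shows the mixed states are unstable, one must argue, via the persistent excitation of the martingale noise along the unstable directions, that the iterates escape these saddles with probability one and settle on a genuinely stable single-component state rather than merely on the chain-recurrent set of the ODE.
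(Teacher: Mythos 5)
Your proposal follows essentially the same route as the paper: split $\vect m$ into its components on $W=\mspan\{\vect d_1,\dots,\vect d_K\}$ and $W^\bot$, observe that the expected update $D^T P \Phi$ lies in $W$ so the full-rank/classical analysis of Theorem \ref{BCMStablePoints} applies there unchanged, and use the decaying leak (with $\sum_n\gamma_n\delta_n=\infty$, $\sum_n\gamma_n^2\delta_n^2<\infty$, $\delta_n\to0$) to drive the $W^\bot$ component to zero --- the paper runs both halves through Delyon's theorem with Lyapunov functions $-R$ and $\|\vect\psi\|^2$ where you invoke Robbins--Siegmund/Kronecker and the ODE method, but these are interchangeable here. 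The two difficulties you flag are genuine, and the second (escaping the unstable mixed states with probability one) is not resolved in the paper either, which only concludes convergence to the critical set and separately identifies which critical points are Lyapunov-stable.
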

\begin{proof}
See section \ref{Low Rank Proof}
\end{proof}

In expectation, each stable point is selective for one and only one mixture.  That
is to say, $E_{P_k}[\dotprod{\vect d}{ \vect m}]$ is non-zero for only one $k$.  This does not
preclude $E_{P_k}[c^2] $ from being quite large relative to $E_{P_k}[c]$.  In the case where mixture means are co-linear,
or the mixture means are nearly linearly dependent, the selectivity might be quite poor, in the
sense that $\frac{|E[\dotprod{\vect d_k}{ \vect m}]|}{\| \vect{d}_k \| \|\vect m\|} \ll 1$ for all $k$, including the mixture it is supposed to be selective for.
For an intuition of the geometry of this situation, see Figure \ref{BCM Geometry}.  

\begin{figure}[p!]
\centering
\includegraphics[width=.4\textwidth]{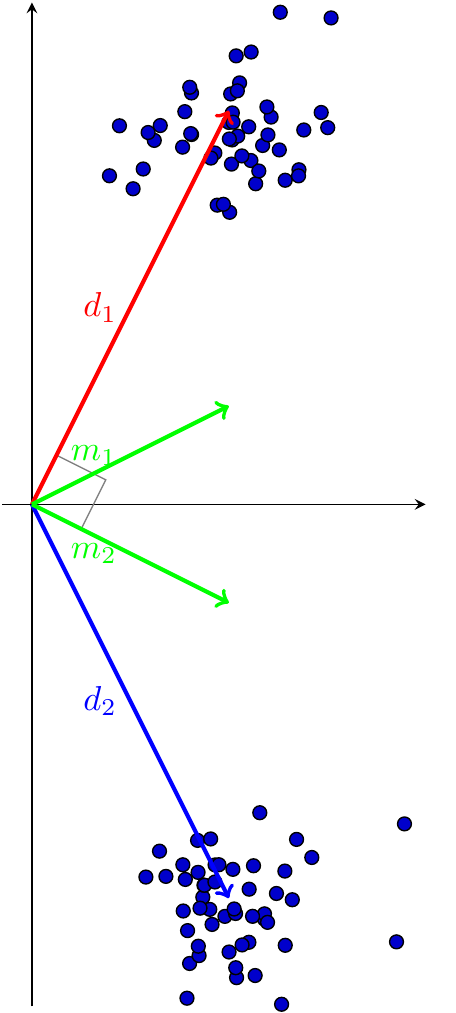}
\includegraphics[width=.55\textwidth]{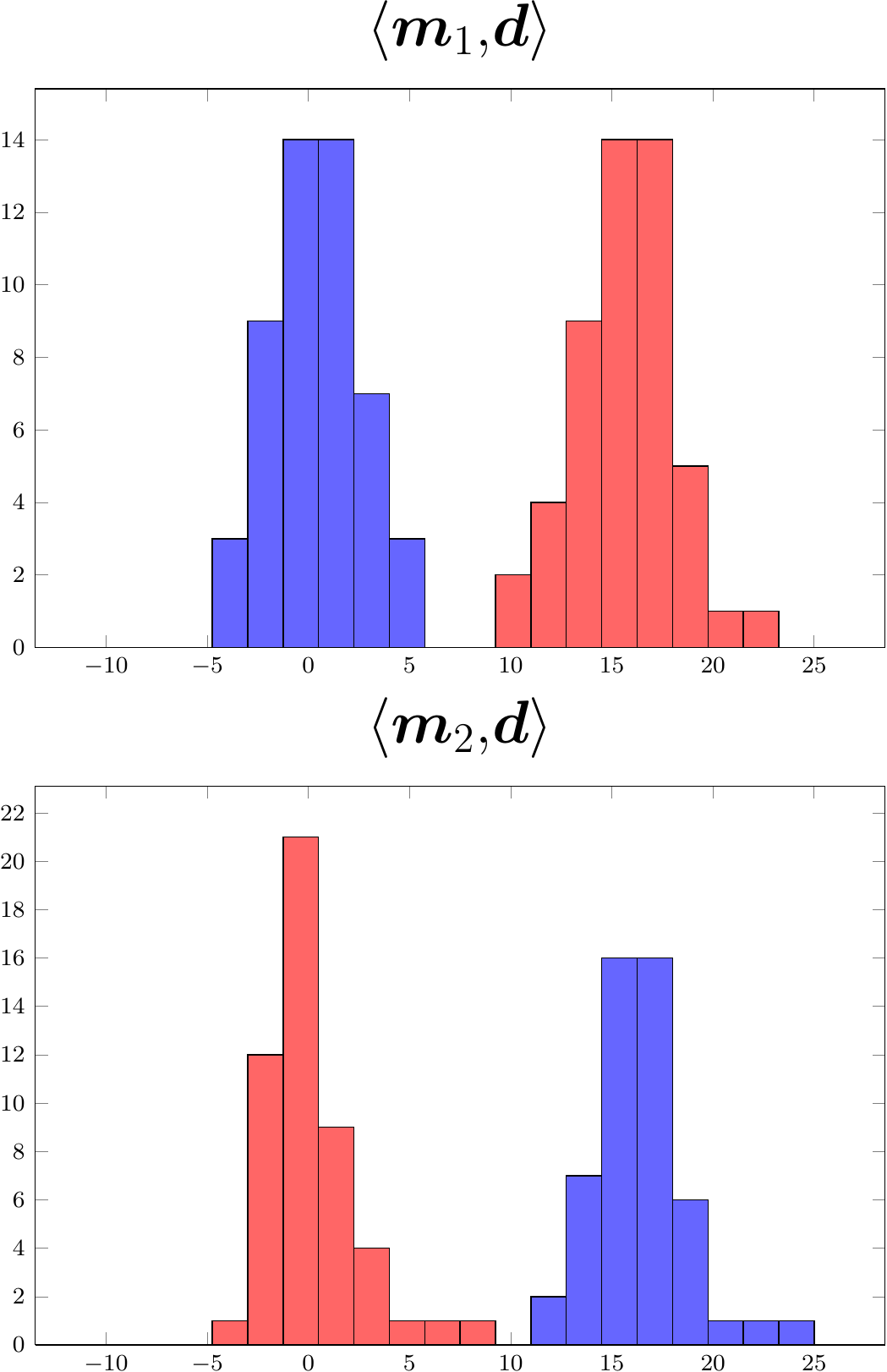}
\caption[Geometry of stable states]{Geometry of stable states.  Each $\vect{m}$ stable state is orthogonal to 
the expectation of all but one of the input distributions.} \label{BCM Geometry}
\end{figure}

We emphasize the extremely limited restrictions on the conditional distributions $P_k$.  They 
are required only to have linearly independent means, bounded variances, and the number of 
classes $K$ must be less than or equal to the dimension $N$.  Under the multi-view assumption triplet BCM converges to the same fixed point regardless of the noise distribution.
Under the multi-view assumption triplet BCM converges to the same fixed point regardless of the noise distribution.  We add that it is often possible to take 
a set of conditional distributions that do not have these properties, and add non-linear 
transformations of their dimensions as additional variables.  If the original distribution
was in fact a mixture, the transformed distribution will remain a mixture.  The transformed
version may then have the required properties.  

This suggests that this learning rule, combined with non-linear transformations, may be a 
powerful building block for learning with slowly varying data.  Examples of useful transformations include binning, bounding, and thresholding.  High dimensional 
histogram estimators can be easily constructed, either in the original signal space, or 
perhaps more plausibly in the Fourier domain to produce Gabor-like filters.  
We will not investigate the range of potential useful transformations, needless to say there
is a rich set of possible directions for future research. 

\section{Stochastic Approximation} \label{Stochastic Approximation}
Having found the stable points of the \emph{expected} update for BCM and triplet BCM, we now turn to a proof of convergence for the noisy update generated in practice.  For this, we 
turn to results from the theory of stochastic approximation.  

We will decompose our 
update into two parts, the expected update, and the (random) deviation.  This deviation will be a $L_2$ bounded martingale, while the expected update will be a ODE with the previously calculated stable points.  Since the expected update 
is the gradient of a objective function $R$, the Lyapunov functions required for the stability analysis are simply this objective function.    

The decomposition of the triplet BCM stochastic process is as follows:

\begin{align*}
\vect{m}_n - \vect{m}_{n-1} &= \gamma_n \phi(c^2_n, c^3_n, \theta_{n-1})\vect{d}^1 \nonumber \\
&= \gamma_n E[\phi(c^2, c^3, \theta_{n-1})\vect d^1] + \gamma_n \left (\phi(c^2, c^3, \theta_{n-1})\vect d^1 - E[\phi(c^2, c^3, \theta_{n-1})\vect d^1] \right) \\
&= \gamma_n h(\vect{m}_n) - \gamma_n \eta_n
\end{align*}
Here, $h(\vect{m}_n)$ is the deterministic expected update, and $\eta_n$ is a martingale.  All our
expectations are taken with respect to triplets of input data.  The decomposition for classical 
BCM is similar.  

This is the Doob decomposition \cite{doob1953stochastic} of the sequence.  Using a theorem of Delyon \cite{delyon1996general}, 
we will show that several variants of our triplet BCM algorithm will converge with probability 1, though they will require some slight modifications to guarantee convergence.   In particular, 
the unconstrained versions of our algorithm may require a finite number of projections down 
to a feasible space where the solutions actually lie.  This is due to the fact that the 
algorithm may oscillate increasingly wildly, with $\|\vect{m}_i\| \rightarrow \infty$.  

The behavior of the stochastic algorithm around stable points is intuitively clear.  The stable points act as sinks.  In some region around the stable points, the stochastic algorithm behaves
like a biased random walk, with the bias attracting the process toward the stable point.  

\begin{defin} (Delyon 1996)

Recall our update \[\vect m_n - \vect m_{n-1}= \gamma_n h(\vect{m}_n) - \gamma_n \eta_n\]
Let $\gamma_n$ be a sequence with $\sum_{i=0}^\infty \gamma_i = \infty$ and $\sum_{i=0}^\infty \gamma_i^2 < \infty$.  Let $\eta_n$ be a perturbation, $\eta_n = e_n + r_n$. 
A stochastic algorithm is A-stable if $\vect{m}_n \in K_0$ infinitely often, and the series 
$\sum \gamma_n e_n $ or $\sum \gamma_n e_n \mathbb{1}_{V(\vect{m}_n) \le M}$ converges for all $M$ and $r_n \rightarrow 0$.
\end{defin}

As is typically the case in the study of stochastic algorithms, the requirement that the 
update returns infinitely often to a compact set is quite difficult to check.  We instead
project our weights down to a more reasonable compact set where we know the true parameters lie if they ever become unreasonably large.  We note that this set can be made arbitrarily 
large, and for a sufficiently small initial step size we have found this projection does not need to be done in practice.  We note that biological neurons also have a limits
on their firing that limit the selectivity of a neuron in practice.

\begin{thm}(Delyon 1996)
	The vector field $h$ is defined on an open set $\mathcal{O} \subset \R$.  There exists a 
	nonnegative $C_1$ Lyapunov function $V$ and a finite set $\mathcal{K} \subset \mathcal O$ s.t.

	\begin{enumerate}[1)]
		\item $V(x)$ tends to $+\infty$ if $x \rightarrow \partial \mathcal{O}$ or $|x|\rightarrow \infty$
		\item $h$ is continuous and $\dotprod{\nabla V(x)}{h(x)} < 0$ if $x \notin \mathcal K$ 
		\item (Optional Projection) Let $\pi(x)$ be a continuous projection onto 
		a compact set $\mathcal{Q} \subset \mathcal{O}$ s.t. $\pi(x) = x$ for $x \in \mathcal{Q}$, and $\dotprod{\nabla V(x)}{\pi(x)- x} < -\delta |\pi(x) - x| $ for
		some $x$ in $\mathcal{O} \backslash \mathcal{Q}$
	\end{enumerate}
	Let $\vect{m}_n = \vect{m}_{n-1} + \gamma_n h(\vect{m}_{n-1}) + \gamma_n \eta_n$
	We further require that the stochastic algorithm is A-stable.  Then, $d(\vect{m}_n, \mathcal{K})$ converges to $0$. \label{Stochastic Convergence Proof}
\end{thm}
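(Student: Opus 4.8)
The plan is to read Theorem \ref{Stochastic Convergence Proof} as an instance of the \emph{ODE method} for stochastic approximation: the recursion $\vect m_n = \vect m_{n-1} + \gamma_n h(\vect m_{n-1}) + \gamma_n \eta_n$ is a noisy Euler discretization of the flow $\dot{\vect x} = h(\vect x)$, and the function $V$ supplied by the hypotheses is a \emph{strict} Lyapunov function for that flow. The A-stability hypothesis is exactly what replaces the usual bounded-variance assumptions on the noise: writing $\eta_n = e_n + r_n$, the stated convergence of $\sum_n \gamma_n e_n$ tames the martingale part, while $r_n \to 0$ kills the residual bias. The argument then has three pillars, which I would carry out in order: (i) the iterates eventually remain in a fixed compact subset of $\mathcal O$; (ii) their interpolation tracks the flow over finite horizons; and (iii) the flow, being Lyapunov-dominated, drives everything into $\mathcal K$.

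First I would establish the compactness of pillar (i). A-stability provides $\vect m_n \in K_0$ infinitely often, and condition (1) makes $V$ coercive, blowing up both at $\partial\mathcal O$ and as $|x|\to\infty$, so the iterate cannot escape to the boundary or to infinity. Condition (3) supplies the enforcement mechanism: whenever the iterate leaves $\mathcal Q$ the projection $\pi$ fires with a guaranteed strict decrease $\dotprod{\nabla V(x)}{\pi(x)-x} < -\delta|\pi(x)-x|$, so projections cannot spoil the Lyapunov descent and the trajectory is confined to a compact set on which $\nabla V$ and the second derivatives of $V$ are bounded.

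On that compact set I would track $V$ along the recursion. A second-order expansion gives
\begin{equation*}
V(\vect m_n) \le V(\vect m_{n-1}) + \gamma_n \dotprod{\nabla V(\vect m_{n-1})}{h(\vect m_{n-1})} + \gamma_n \dotprod{\nabla V(\vect m_{n-1})}{\eta_n} + C\gamma_n^2,
\end{equation*}
where the remainder is absorbed using $\sum_n \gamma_n^2 < \infty$ and the bounds just obtained; the drift term is $\le 0$ by condition (2) and strictly negative off $\mathcal K$. Summing, the series $\sum_n \gamma_n \dotprod{\nabla V}{e_n}$ converges by A-stability and the $r_n$ contribution is asymptotically negligible since $r_n \to 0$. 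Passing to the piecewise-linear interpolation $\bar{\vect m}(t)$ with increments $\gamma_n$, these facts show it is an asymptotic pseudo-trajectory of $\dot{\vect x} = h(\vect x)$: over any fixed window $[t,t+T]$ the accumulated noise vanishes as $t\to\infty$. A LaSalle-type argument then forces the limit set of $\bar{\vect m}$ to be flow-invariant and contained in the critical set $\{\dotprod{\nabla V}{h}=0\}\subset \mathcal K$, and the \emph{finiteness} of $\mathcal K$ upgrades this to $d(\vect m_n,\mathcal K)\to 0$, since a connected invariant limit set inside a finite set must be a single equilibrium.

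The main obstacle I anticipate is pillar (i), the stability/compactness step, not the Lyapunov descent. Without the classical $L_2$-bounded-increment hypotheses there is no a priori control on $\|\vect m_n\|$, and this is precisely why the optional projection is written into the statement; the delicate points are verifying that the projection can be chosen to satisfy the strict-decrease inequality throughout $\mathcal O\setminus\mathcal Q$ and that its invocations do not accumulate destructively. By comparison, the translation of the abstract A-stability series condition into the pseudo-trajectory property (via martingale convergence of $\sum_n\gamma_n e_n$) and the final LaSalle argument are comparatively standard once boundedness is in hand.
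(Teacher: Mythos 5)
Your plan is sound, but it follows the Bena\"{\i}m/Kushner ``ODE method'' route---piecewise-linear interpolation, asymptotic pseudo-trajectories, LaSalle invariance---whereas the paper's proof (Delyon's original argument) never interpolates at all. The paper instead defines a shifted iterate $\vect m_n' = \vect m_n + \sum_{i=n+1}^\infty \gamma_i e_i$, so that the convergent martingale tail is absorbed into the state and the recursion becomes a purely deterministic one evaluated at a vanishing perturbation $\delta_n'$ of the argument; it then Taylor-expands $V$ and runs a discrete descent argument directly on the scalar sequence $u_n = V(\vect m_n')$, showing that outside any neighborhood $A_\alpha$ of $V(\mathcal K)$ the sequence drops by at least $\gamma_n\epsilon$, that jumps are eventually smaller than the gaps between the finitely many intervals of $A_\alpha$, and hence that $u_n$ converges to a value of $V$ on $\mathcal K$, after which divergence of $\sum\gamma_n$ forces the iterates into every neighborhood of $\mathcal K$. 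The paper's route is more elementary and self-contained (no chain-transitivity or flow-invariance machinery), and the finiteness of $\mathcal K$ is used only to separate the values $V(\mathcal K)$ into disjoint intervals; your route is more modular and in fact delivers the slightly stronger conclusion that the limit is a single point. One caution about your middle step: convergence of $\sum_n \gamma_n \dotprod{\nabla V(\vect m_{n-1})}{e_n}$ does \emph{not} follow immediately from the A-stability hypothesis that $\sum_n \gamma_n e_n$ converges, because the weight $\nabla V(\vect m_{n-1})$ varies with $n$; this is precisely the gap that the paper's tail-shift device is designed to close (and that the pseudo-trajectory formulation avoids by only requiring windowed noise sums to vanish). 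If you commit to the APT framework you should drop that summation claim and rely on the windowed-noise criterion instead; if you keep the direct Lyapunov summation you need an Abel-summation or tail-shift argument to justify it.
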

The speed of convergence and size of the respective convergence regions will be discussed later.
We need to check these conditions for each of our algorithms.  We will find that two of our 
algorithms will need to be slightly stabilized to ensure that the sequence $\vect{m}_n$ enters
a compact region infinitely often.  

In all of our algorithms the (deterministic) objective functions $R$ will act as our Lyapunov function $V$.  

For completeness we present the proof of Theorem \ref{Stochastic Convergence Proof}
\begin{proof}
	Let 
	\begin{align*}
		\vect{m_n}' &= \vect{m_n} + \sum_{i = n+1}^\infty \gamma_i e_i \\
		\vect{\delta}_n' &= -\sum_{i = n}^\infty \gamma_i e_i 
	\end{align*}
	Then, 
	\[\vect{m}_n' = \vect{m}'_{n - 1} + \gamma_n h(\vect{m}_{n - 1} + \delta_n') + \gamma_n r_n\]
	Since by assumption our
	sequence remains in a compact set, say $\mathcal C$, and our step sizes are bounded in $L_2$, by the martingale convergence theorem, $\sum_i \gamma_i e_i$ converges. By continuity of
	$h$ and since $V$ is $C_1$ we have
	\begin{align*}
	V(\vect{m}_n') &= V(\vect{m}_{n-1}') +\gamma_n h(\vect{m}_{n - 1}' + \delta_n') + \gamma_n r_n \\
	&= V(\vect{m}_{n-1}') + \gamma_n \dotprod{\nabla V(\vect{m}'_{n-1})}{h(\vect{m}_{n-1}' + \delta_n')} + \gamma_n r_n + O(\gamma_n^2)\\
	&= V(\vect{m}_{n-1}') + \gamma_n \dotprod{\nabla V(\vect{m}'_{n-1} + \delta_n')}{h(\vect{m}_{n-1}' + \delta_n')} + \gamma_n r_n' + O(\gamma_n^2)
	\end{align*}
	where $r'_n$ has absorbed the error in the dot product from shifting the Taylor series
	slightly.  This goes to zero since $\delta'_n $ goes to zero.

	Fix $\mathcal N$, an open neighborhood of the set $\mathcal K \cap \mathcal C$.  Since 
	$\mathcal C \backslash \mathcal N$ is compact, and $\dotprod{\nabla V(\vect m)}{h(\vect m)} < 0 $ 
	outside of $\mathcal K$, $\dotprod{\nabla V(\vect m)}{h(\vect m)} < -\epsilon'$ outside of 
	$\mathcal N$.  

	Since $r_n' $ goes to zero, there exists an $N$ such that $n > N$ implies
	\[V(\vect m_{n}') \le V(\vect m_{n-1}') - \gamma_n \epsilon \1_{\mathcal N^c \cap \mathcal C} + \gamma_n C \1_{\mathcal N}(\vect m_{n-1}')\]

	Let $A_\alpha$ be the set 
	\[A_\alpha = \{ x \in \R : d(x, V(\mathcal S \cap \mathcal C)) < \alpha\]
	and set $\alpha$ small enough so that $A_\alpha$ is simply disjoint intervals of 
	size $2 \alpha$, one for each zero of $V$.  Let $\mathcal N = V^{-1}(A_\alpha)$.  Then $u_n = V(\theta'_n)$ satisfies
	\[u_n \le u_{n-1} - \gamma_n \epsilon + \gamma_n C' \1_{A_\alpha}(u_{n-1})\]
	Whenever $u_n$ is out of $A_\alpha$ it decreases by at least $\gamma_n \epsilon$.  Since
	$\sum \gamma_n $ is infinite, and $u_n$ is lower bounded, whenever $u_n$ leaves $A_\alpha$ it must reach another interval of $A_\alpha$ corresponding to smaller values of $u$.  If $n$ is
	large enough such that $\gamma_n C$ is smaller than the distance between disjoint 
	intervals in $A_\alpha$ then $u_n$ cannot jump more than $\gamma_nC$.  Therefore
	$d(u_n, A_\alpha)$ must go to zero.  However, $\alpha$ was arbitrary.  So $u_n $ must
	converge to $V(\mathcal K)$.
	Since $u_n$ converges, for all $\tau$ there exists a $N(\tau)$ s.t. for all 
	$N(\tau) < n < p$ $|u_n - u_p | < \tau$, which implies that
	\begin{align*}
		\sum_{i = n + 1}^p (u_{n-1} - u_n) &< \tau \\
		\sum_{i = n + 1}^p \gamma_i \epsilon - \sum_{i = n + 1}^p \gamma_n C' \1_{\mathcal N} (\vect m_{i-1}') &< \tau
	\end{align*}
	For $p$ sufficiently large, $\sum_{i = n + 1}^p \gamma_i \epsilon > \tau$ so 
	at least one $i$ between $n$ and $p$ must have been in $\mathcal N$.  But $\mathcal N$
	was arbitrary, as was $\tau$, so $\vect m'_n$ must converge to $\mathcal K$, and therefore
	so must $\vect m_n$.
\end{proof}
\subsection{Case 1: Full Rank} \label{Full Rank proof}
We start with the simplest case.  Assume $K = N$, so the matrix of conditional expectations
 $D$ is full rank.  We further fix a large ball \[B_r := \{\vect{m}: \vect{m}^T M \vect{m} < r\}\] and a projection \[ \displaystyle \pi(\vect{m}) = 
 \begin{cases} 
\vect{m} & \vect{m}^T M \vect{m} <= r\\
r\frac{\vect{m}}{\sqrt{\vect{m}^T M \vect{m}}} &  \vect{m}^T M \vect{m} > r
 \end{cases}
\]
Let $\mathcal{O} = \R^N$
\begin{thm}
	For the full rank case, the projected update converges w.p. 1 to the zeros of
	$\nabla \Phi$
\end{thm}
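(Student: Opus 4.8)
The plan is to invoke Delyon's theorem (Theorem \ref{Stochastic Convergence Proof}) with the BCM objective as the Lyapunov function, feeding it the Doob decomposition already exhibited. Concretely I would take $\mathcal O = \R^N$, the projection target $\mathcal Q = B_r$, the Lyapunov function $V = -R$, and the candidate limit set $\mathcal K = \{\vect m : \nabla R(\vect m) = 0\}$, the critical points of $R$. Since the expected update is $h = \nabla R$ (gradient ascent), we get $\dotprod{\nabla V(\vect m)}{h(\vect m)} = -\norm{\nabla R(\vect m)}^2 < 0$ off $\mathcal K$, and $h$ is a polynomial vector field, hence continuous; this dispatches condition (2). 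By the analysis of Theorem \ref{BCMStablePoints}, in the full-rank case ($D$ invertible) the critical points are in bijection with the subsets $S_+ \subseteq \{1,\ldots,K\}$ through $\vect c = \1_{S_+}\beta_{S_+}$ and $\vect m = D^{-1}\vect c$, so $\mathcal K$ is finite as required. The coercivity requirement (1) is not available for $V = -R$, but it is exactly what the optional projection (3) is meant to replace: once the iterates are confined to the compact $B_r$, boundedness is automatic and we lean on (3) instead.

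The first genuine computation is the projection condition (3). For $\vect m \notin B_r$ the map is the radial $M$-retraction $\pi(\vect m) = r\,\vect m / \sqrt{\vect m^T M \vect m}$, so $\pi(\vect m) - \vect m$ is a negative multiple of $\vect m$, i.e. it points $M$-inward. Differentiating $R$ gives $\nabla R = T(\vect m, \vect m, \cdot) - M(\vect m, \vect m)\, M\vect m$, and a degree count shows the quartic-penalty term (degree three in $\vect m$) dominates the tensor term (degree two) at large scale, so $\nabla R \approx -\norm{\vect m}_M^2\, M\vect m$ and $\nabla V = -\nabla R$ points $M$-outward. Since $\dotprod{M\vect m}{\vect m} = \norm{\vect m}_M^2 > 0$ ($M \succ 0$ in the full-rank case), the inner product $\dotprod{\nabla V(\vect m)}{\pi(\vect m) - \vect m}$ behaves like $-\lambda \norm{\vect m}_M^4$ with $\lambda = 1 - r/\norm{\vect m}_M$, while $|\pi(\vect m) - \vect m| = \lambda|\vect m|$; comparing growth in $|\vect m|$ gives the required $\dotprod{\nabla V(\vect m)}{\pi(\vect m)-\vect m} < -\delta|\pi(\vect m)-\vect m|$ for $r$ large. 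This is precisely why the projection is defined in the $M$-metric: the deflation term of $R$ is what pushes the flow back inside the ellipsoid.

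It remains to establish A-stability. The projection confines every iterate to the compact ellipsoid $B_r$, so the recurrence requirement (iterates in a compact set infinitely often) holds trivially. For the perturbation I would split $\eta_n = e_n + r_n$, with $e_n$ the martingale difference coming from resampling the triplet and $r_n$ collecting the threshold and Taylor remainders. On $B_r$ the hypothesis $E_{P_k}[\norm{\vect d}^2] < \infty$ makes $\phi(c^2,c^3,\theta)\vect d^1$ square-integrable with a bound uniform in $\vect m$, so $\{e_n\}$ is $L_2$-bounded; together with $\sum_n \gamma_n^2 < \infty$ this makes $\sum_n \gamma_n e_n$ an $L_2$-bounded martingale, which converges almost surely by the martingale convergence theorem. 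Delyon's theorem then yields $d(\vect m_n, \mathcal K) \to 0$ w.p. 1, i.e. convergence to the zeros of $\nabla R$.

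I expect the main obstacle to be the treatment of the sliding threshold $\theta$ inside the perturbation. If $\theta_{n-1}$ is the exact conditional value $E[c^1 c^2]$ viewed as a function of $\vect m_{n-1}$, then $h = \nabla R$ cleanly and $r_n \to 0$ follows from the Taylor remainder alone; but if $\theta$ is carried by its own running average, its fluctuations couple into the weight update and one must argue, through the relative step-size conditions and a two-timescale averaging estimate, that this contribution is absorbed into $r_n$ with $r_n \to 0$. Making this rigorous — rather than the comparatively routine descent and martingale checks — is the delicate step. I would also flag that Delyon delivers convergence only to the full critical set $\mathcal K$; upgrading this to the selective stable states of Theorem \ref{BCMStablePoints} (ruling out the saddle configurations with $|S_+| > 1$) is a separate non-degeneracy argument and is not part of the present statement.
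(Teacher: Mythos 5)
Your proposal follows essentially the same route as the paper: invoke Delyon's theorem with $V = -R$, obtain A-stability from the projection onto $B_r$ together with the $L_2$-bounded martingale (bounded variance of the $P_k$ plus $\sum_n \gamma_n^2 < \infty$), and verify the optional projection condition by noting that the degree-three term $-M(\vect{m},\vect{m})M\vect{m}$ of $\nabla R$ dominates at large $\|\vect{m}\|$ so that the radial $M$-retraction satisfies the required inequality for large $r$. Your version of the projection check is in fact cleaner and more explicit than the paper's (which states the growth bounds somewhat loosely), and your closing caveats — about how the sliding threshold $\theta$ enters the perturbation and about Delyon delivering only the critical set rather than the selective stable states — are accurate observations that the paper likewise leaves implicit.
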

\begin{proof}
Let $\mathcal{O}$ be an open neighborhood of $B$.  We replace our update 
 with its projected version 
\begin{equation}
	\vect{m} = \pi(\gamma_n \phi(c^2, c^3, \theta_{n-1})\vect{d}^1)
\end{equation}
This projection gives us the first part of the A-stability immediately.  Furthermore,
the bounded variance of each $P_k$ and the boundedness of $\vect{m}$ means 
each $c$ has bounded variance, so the martingale increment has bounded variance.  This,
plus the requirement that $\sum \gamma_i^2 < \infty$ means the martingale is bounded 
in $L_2$ so it converges.  This gives us the A-stability of the sequence.

Let $V = -R$ then conditions 1) and 2) of Delyon are clearly satisfied.  
The optional projection requirement is satisfied by noting that for some $C$
\[\frac{1}{C} \vect{m}^T M \vect{m} < \|\vect{m}\|^2 < C \vect{m}^T M \vect{m}\]
and for large enough $\vect m$
\begin{align*}
\dotprod{\nabla \Phi}{\pi(\vect{m}) - \vect{m}} < C\|\vect{m}\|^4)) \\
\text{ and } \| \pi(\vect{m} - \vect{m}) \|  &= C'(O(\|\vect{m}\|))
\end{align*}
where $C' = \frac{r}{\vect{m}^T \vect{m}} - 1$ so for sufficiently large $r$ the optional projection requirement is satisfied.  Therefore the stochastic algorithm 
converges with probability 1 to the zeros of $\nabla R$.
\end{proof}
We note that the stability of the zeros was investigated in section \ref{BCMStablePoints}
\subsection{Case 2: Low-Rank} \label{Low Rank Proof}
The case $K < N$ is somewhat trickier.  As with the full rank case, we require a projection
onto a feasible set which contains all of the stable points.  As $M$ is no longer full-rank,
we instead project using the norm $\vect{m}^t (M + \proj_{W^c})\vect{m}$, where as before, 
$W = \mspan \{\vect{d_1}, \ldots, \vect{d}_K\}$.  The expected update always lies in 
$W$, however, the martingale increment does not.  Therefore we expect convergence to 
the stable points in $W$, however we expect $\vect{m}$ to drift randomly in 
$W^\bot$.  While this does not affect the expected selectivity of the algorithm, 
it is undesirable for selectivity of the neuron to drift randomly orthogonal to subspace spanned by the true 
conditional means.  

To address this issue, we add a slight shrinkage bias to the weights. 
While a static bias would change the fixed points of the algorithm, a slowly decreasing
increment can be chosen to guarantee convergence to the stable points of 
the expected update in $W$, and to zero in $W^\bot$.  Our modified update rule will be

\begin{align}
	\vect{m}_n = \vect{m}_{n - 1} + 
	\gamma_n (-\delta_n \vect{m}_{n-1} + \phi(c_n^1, c_n^2, \theta_{n-1}) \vect{d}_n^3) \label{low rank update}
\end{align}
We assume $\gamma_n$ and $\delta_n$ have the following properties:
\begin{enumerate}
	\item $\sum_n \gamma_n \rightarrow \infty$
	\item $\sum_n \gamma_n^2 < \infty$
	\item $\delta_n \rightarrow 0$
	\item $\sum_n \gamma_n \delta_n \rightarrow \infty$
	\item $\sum_n \gamma_n^2 \delta_n^2 < \infty$
\end{enumerate}
For example, $\gamma_n = n^{-(1 - \epsilon)}$ and $\delta_n = n^{-\epsilon}$ for $0<\epsilon<\frac12$ works.
As before, we denote the expected update by $h(\vect{m})$.  We note the following facts:
$h(\vect{m}) = h(\proj_W \vect{m})$ and $R(\vect{m}) = R(\proj_W \vect{m})$.  We split our process into two processes.
\begin{align}
\vect{\varphi}_n &= \pi_1(\proj_W \vect{m}_n) \nonumber \\
		&= \pi_1(\varphi_{n-1} + \gamma_n(-\delta_n \vect{\varphi}_{n-1} + h(\vect{\varphi_{n-1}}) + 
		\eta_n')) \\
\vect{\psi}_n &= \pi_2(\proj_{W^\bot} \vect{m}_n) \\
			&= \pi_2(-\gamma_n (\delta_n \vect{\psi}_{n-1} + \eta_n''))
\end{align}
where $\eta_n' = \proj_W{\eta_n}$ and $\eta_n'' = \proj_{W\bot} \eta_n$.  We note that
$\eta_n'$ and $\eta_n''$ are $\mathcal{F}_n$-measurable martingales, where 
$\mathcal{F}_n$ is the sequence of $\sigma$-algebras generated by the sequence $\vect{m}_n$. 

 First, we show $\vect{\varphi}_n$ converges to one of the stable states of the expected update.  
 note that $-\delta_n \vect{\varphi}_{n_1} \rightarrow 0$ so it meets the definition of $r_n$ in the
 definition of A-stability.  The expected update and martingale increment of $\vect{\varphi}$ behaves precisely like the 
 full rank case, except for some extra variability in the martingale increment which remains 
 controlled by the projections $\pi_1, \pi_2$.  As before, we use $R$ as our Lyapunov function, noting that $R(\vect{m})$ depends only on the $\vect \varphi$ component of $m$.  As the column space of $M$ is $W$, our
 previous projection only restricts $\vect \varphi$ to a compact space.  We use it as our 
 $\pi_1$.  
 
 For $\vect{\psi}$, the expected update $-\gamma_n \delta_n \vect \psi_{n-1}$ has only one fixed point at
 zero.  $\pi_2$ can be any projection onto a large ball in $W^\bot$, say $\|\vect \psi \| < r$.  For our Lyapunov function,
 $\|\vect{\psi}\|^2 $ trivially satisfies all of the required conditions.  Though the step size
 decays more rapidly than for $\vect \varphi$, it does not decay rapidly enough for 
 $\sum_n \gamma_n \delta_n$ to converge.

 In practice, we often care more about $M \vect m$ than $m$ itself, as that is directly 
 comparable to the parameters of the mixture model.  Since the column space of $M$ is $W$, 
 this shrinkage can be safely ignored, at the expense of increased variance of the 
 martingale increment, and stochastic drift of the orthogonal component.   

\section{Networks of BCM Neurons}\label{Network}
Like the classical BCM neuron, the modified BCM neuron can be arranged in a network with lateral inhibition or excitation.  This network will not affect the stable points of individual neurons, each will remain selective for one and only one mixture component in expectation.  However it will affect the distribution of stable points across neurons.  While proving this, we will 
correct a few errors in the proof of the original Intrator result, which is the basis of this section, and give a cleaner characterization of the network using the Kronecker product.  This characterization will decouple
the network activity from the selectivity of individual neurons.
\subsection{Kronecker Product}
The Kronecker product gives a canonical matrix form for tensor products.  We will denote the Kronecker product by $\ktimes$.  
\begin{align*}
A_{m \times n} = \begin{bmatrix}
	a_{11}  & \cdots & a_{1n} \\
	\vdots & \ddots & \vdots \\
	a_{m1} & \cdots & a_{mn}
\end{bmatrix} & B_{p \times q} = \begin{bmatrix}
	b_{11}  & \cdots & b_{1q} \\
	\vdots & \ddots & \vdots \\
	b_{p1} & \cdots & b_{pq}
\end{bmatrix} \\
(A\ktimes B)_{mp \times nq} &= \begin{bmatrix}
	a_{11}B  & \cdots & a_{1n}B \\
	\vdots & \ddots & \vdots \\
	a_{m1}B & \cdots & a_{mn}B
\end{bmatrix}
\end{align*}

We use only a few facts about Kronecker products.
\begin{align}
(A \ktimes B) (C \ktimes D) &= AC \ktimes BD \\
(A \ktimes B)^{-1} &= (A^{-1} \ktimes B^{-1}) \\
(A \ktimes B)^T &= A^T \ktimes B^T \\
I_n \ktimes I_m &= I_{nm} \\
\text{if } A\vect{v} = \lambda \vect{v} &\text{ and } B\vect{w} = \mu \vect{w} \text{ then } \nonumber \\(A \ktimes B) (\vect{v} \ktimes \vect{w}) &= \lambda \mu \vect{v} \ktimes \vect{w}
\end{align}
In particular, if $A$ and $B$ are p.s.d., so is $A \ktimes B$.  All of these properties follow
trivially from the definition.  

\subsection{Networks of BCM Neurons (Intrator 1996)} \label{Networks of BCM}
With the notation of Kronecker products settled, we present a proof that the expected update
for triplet BCM with network interactions has the same stable states as does triplet BCM.  That
is to say, each neuron will be selective for a single mixture component in expectation.  Numerical results show that the network interactions can be modulated to ensure that the 
neurons converge to the same or different states through excitatory or inhibitory lateral networks.  
\begin{figure}[ht]
	\centering
	\includegraphics[width=.75\textwidth]{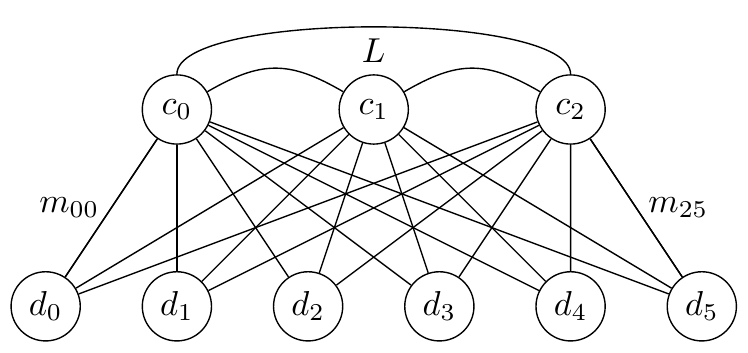}
	\caption[Configuration for a network of BCM neurons]{Configuration for a network of BCM neurons.  All neurons receive the
	same input, and are connected with a network of lateral connections, $L$\label{fig:lateral}}
\end{figure}
We begin with a network of $n$ neurons.
Let $\vect{m}_i$ denote the vector of synaptic weights for neuron $i$.  Let $c_{ik} = E_{P_k}[\dotprod{\vect{m}_i}{\vect{d}}]$, the expected response of neuron $i$ to class $k$.  Let $\Phi_{ik} = \phi(c_{ik}, c_{ik}, \theta) = E_{P_k} \phi(c_i^2, c_i^3, \theta)$.  Let $P$ be the diagonal matrix of class probabilities, $P_{ii} = \alpha_i$.  Finally, let $D$ be the matrix of expected data vectors, such that $D_{ij}$ is the $j$th entry of the mean of distribution $i$.  

To organize the computation over multiple neurons, we use the Kronecker product.  Let $\mathcal{D} = I_n \ktimes D$ and $\mathcal{P} = I_n \ktimes P$.
\begin{align*}
\mathcal{D} &= \begin{bmatrix}D &\cdots &0 \\ \vdots &\ddots &\vdots \\0 & \cdots& D   \end{bmatrix}& \mathcal{P} = \begin{bmatrix}P &\cdots &0 \\ \vdots &\ddots &\vdots \\0 & \cdots& P   \end{bmatrix}
\end{align*}
We set the multivectors $\vect{c} = [c_{11},  c_{12}, \ldots, c_{nk}]^T$ and $\vect{m} = [m_{11}, m_{12}, \ldots, m_{nd}]^T$.

With this notation, the expected per class firing rate for a neuron $\vect c$ for the triplet 
BCM rule can be written as 
\[\vect c  = \mathcal D \vect m\]
We will modify this rule to incorporate lateral interaction between all neurons, as seen in 
figure \ref{fig:lateral}.  We assume that the firing rate of neuron $\vect c_{i\bullet}$ is a
linear combination of its firing rate due to its input, and the firing rate of all other 
neurons in the network.  We denote the matrix of these weights by $L$.
We assume (unfortunately) that the lateral connections between neurons occur essentially instantaneously, so that the firing rate of the neuron equilibrates instantly.  Therefore
\[\vect c_{i j} = D \vect m_{i \bullet} + L \vect c_{\bullet j}\]
The matrix $L$ is a symmetric connection matrix and we assume the operator norm of $L$, $|L| < 1$.  We set $\mathcal{L} = L \ktimes I$.  Note the difference in the position of the identity matrix between $\mathcal{D}$ and $\mathcal{L}$.  The left hand side of the Kroecker product represents signals applied uniformly within each neuron, while the right hand side represents signals applied across neurons.  With this notation and assumptions, 
\begin{align}
\vect{c} &= \mathcal{D} \vect{m} + \mathcal{L}\vect{c} \\
\vect{c} &= (I - \mathcal{L})^{-1} \mathcal{D}\vect{m} \nonumber \\
&= ((I - L)^{-1} \ktimes D) \vect{m} 
\end{align}

Conveniently, the network interactions and feedforward input remain decoupled, with 
the network interactions on the left side of the Kronecker product, and the neural 
selectivity on the right side.   
We will show this decoupling prevents the 
lateral connections from affecting the stability of the stable states.  The expected update is then

\begin{equation}
E[\dot{\vect{m}}] = \mathcal{D}^T \mathcal{P} \Phi \label{Network expected update}
\end{equation}

where the conditional independence of the independent samples have been used both within $\Phi$ and in the product.  
\begin{thm}
If $|L| < 1$, and all of the input criteria of Theorem \ref{Full Rank proof} are met, then
the expected update is stable (in the sense of Lyapunov) when each neuron is selective for
one and only one mixture component.  It is possible that multiple neurons are selective for
the same mixture component.  
\end{thm}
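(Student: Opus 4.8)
The plan is to locate the fixed points, linearize the expected update, and read off the spectrum of the linearization in a coordinate system where the Kronecker structure decouples the network from the per-neuron selectivity. First I would find the fixed points of \eqref{Network expected update}. Since $\mathcal D^T\mathcal P = I_n \ktimes (D^TP)$ by the mixed-product rule, and $D^TP$ is invertible (because $D$ is full rank when $K=N$ and $P$ has positive diagonal), $E[\dot{\vect m}] = \mathcal D^T\mathcal P\Phi$ vanishes iff $\Phi = \vect 0$. As each $\Phi_{ik}$ depends only on the firing rates $c_{i\bullet}$ of neuron $i$ (through $\theta_i = \sum_k \alpha_k c_{ik}^2$), this is $n$ decoupled copies of the single-neuron fixed-point equation, so by the proof of Theorem \ref{BCMStablePoints} every fixed point assigns to each neuron $i$ a subset $S_+^{(i)} \subset \{1,\dots,K\}$ with $\vect c_{i\bullet} = \beta_{S_+^{(i)}}\1_{S_+^{(i)}}$. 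In particular there is such a fixed point for every assignment of a single component to each neuron, with repeats allowed.

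Next I would linearize. The Jacobian is $H = \mathcal D^T\mathcal P\,(\partial\Phi/\partial\vect c)\,(\partial\vect c/\partial\vect m)$, where $\partial\vect c/\partial\vect m = (I-L)^{-1}\ktimes D$ and $\partial\Phi/\partial\vect c = \Psi$ is block diagonal with block $i$ equal to the single-neuron Jacobian $\Psi_i$ evaluated at neuron $i$'s state. Because $\vect c = ((I-L)^{-1}\ktimes D)\vect m$ is a linear isomorphism, $H$ is similar to the $\vect c$-space Jacobian $J = ((I-L)^{-1}\ktimes DD^TP)\,\Psi$ and has the same spectrum. The key computation, and the point where the network decouples, is that at a selective fixed point ($S_+^{(i)} = \{k_i\}$) plugging $|S_+^{(i)}|=1$ into the formula for $\partial\phi/\partial c$ in the proof of Theorem \ref{BCMStablePoints} gives $\Psi_i = -\alpha_{k_i}^{-1}I_K$, a negative scalar multiple of the identity. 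Hence $\Psi = -(S_D \ktimes I_K)$ with $S_D = \diag{\alpha_{k_1}^{-1},\dots,\alpha_{k_n}^{-1}} \succ 0$, and $J$ collapses to a single Kronecker product,
\[ J = -\big((I-L)^{-1}S_D\big)\ktimes (DD^TP). \]

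Then stability follows from the Kronecker eigenvalue rule: the eigenvalues of $J$ are $-\lambda\mu$, with $\lambda$ an eigenvalue of $(I-L)^{-1}S_D$ and $\mu$ of $DD^TP$. Since $L$ is symmetric with $|L|<1$, we have $I-L\succ 0$, so $(I-L)^{-1}S_D$ is similar to $S_D^{1/2}(I-L)^{-1}S_D^{1/2}\succ 0$ and likewise $DD^TP$ is similar to $P^{1/2}DD^TP^{1/2}\succ 0$; both therefore have strictly positive real spectrum. Every eigenvalue of $J$ is thus real and strictly negative, so by Lyapunov's indirect method the fixed point is asymptotically, hence Lyapunov, stable for any component assignment, which establishes the claim together with the possibility of repeats.

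I expect the main obstacle to be a temptation that the single-neuron proof invites: to conclude stability from negative semidefiniteness of $H$. Here $H$ (and $J$) is not symmetric, and one can check that there exist admissible $L$ with $|L|<1$ together with selective assignments (neurons tuned to components whose weights $\alpha_{k_i}$ differ enough) for which the symmetric part of $J$ fails to be negative semidefinite even though the spectrum is entirely negative. This is precisely where Intrator's argument must be corrected: the collapse to a single Kronecker product, made possible by $\Psi_i$ being scalar at selective fixed points, replaces the invalid definiteness test with a valid spectral one and cleanly separates the network factor $(I-L)^{-1}$ from the selectivity factor $DD^TP$. The converse direction, that a fixed point with some $|S_+^{(i)}|>1$ is unstable, reduces at $L=0$ to the single-neuron instability of Theorem \ref{BCMStablePoints} acting on the offending diagonal block; I would extend this to all $|L|<1$ by tracking that the associated positive eigenvalue cannot cross into the stable half-plane while $I-L$ remains positive definite.
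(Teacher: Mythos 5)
Your proposal follows essentially the same route as the paper: locate the fixed points via $\Phi=0$, compute the Jacobian as a sum of Kronecker products, and use the key observation that at a selective fixed point each block $\partial\Phi_i/\partial c_i$ is a scalar multiple of the identity, which collapses the Jacobian to a single Kronecker product separating the network factor $(I-L)^{-1}$ from the selectivity factor. Where you genuinely improve on the paper is the last step: the paper concludes by asserting that each Kronecker factor, in particular $\diag{\alpha_{k_i}^{-1}}(I-L)^{-1}$, is positive semidefinite, but a product of two positive definite matrices need not be symmetric, so "semidefinite" is the wrong test; your replacement by the similarity argument $(I-L)^{-1}S_D \sim S_D^{1/2}(I-L)^{-1}S_D^{1/2}$ with $S_D=\diag{\alpha_{k_1}^{-1},\dots,\alpha_{k_n}^{-1}}$, yielding a strictly positive real spectrum and hence strictly negative eigenvalues of the Jacobian, is the correct way to close the argument (you also get the sign right where the paper writes $\partial\Phi_i/\partial c_i=\alpha_i^{-1}I$ instead of $-\alpha_i^{-1}I$). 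Your caveat that the converse direction --- instability of fixed points with $|S_+^{(i)}|>1$ for general $L\neq 0$ --- needs an additional continuity or perturbation argument is also warranted: the paper asserts that stability of the sum $\sum_i I_{\{i\}}(I-L)^{-1}\ktimes(D^TP\,\partial\Phi_i/\partial c_i\,D)$ forces each block to be stable, but once $L\neq 0$ the sum no longer splits into a single Kronecker product at non-selective states, so that implication is not immediate and is left unproved in the paper as well.
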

\begin{proof}
This expected update is zero iff $\Phi = 0$.  Note that $\Phi$ depends on $\vect m$ only through the vector of expected firing rates $\vect c$.  Our calculation for the zeros of the expected update in the proof of Theorem \ref{BCMStablePoints} only depends on $c$, so the critical points remain unchanged.  However, it is possible that the stability of those solutions will be affected
by the lateral connections.

We need to show that the Jacobian of the expected update is positive semidefinite only when the
neuron is selective for one state.  
\begin{align*}
J E[\dot{\vect{m}}] &= \mathcal{D}^T \mathcal{P} \left(\frac{\partial \Phi}{\partial \vect{c}}\right) \frac{\partial c}{\partial \vect{m}} \\
&= \mathcal{D}^T \mathcal{P} \left(\frac{\partial \Phi}{\partial \vect{c}}\right) (I - \mathcal{L})^{-1}\mathcal{D}\\
&= \left [I \ktimes (D^T P)\right ] \left(\frac{\partial \Phi}{\partial \vect{c}}\right) \left [(I - L)^{-1} \ktimes D \right ]
\end{align*}
$\frac{\partial \Phi}{\partial \vect{c}}$ can be decomposed into individual neurons,
\[ \frac{\partial \Phi}{\partial \vect{c}} = \sum_{i=1}^n I_{\{i\}} \ktimes \frac{\partial \Phi_i}{\partial c_i}\]
so 
\begin{align}
 J E[\dot{\vect{m}}] =\sum_i I_{\{i\}}(I - L)^{-1} \ktimes (D^T P \frac{\partial \Phi_i}{\partial c_i} D)
\end{align}
this can only be stable if each $D^T P \frac{\partial \Phi_i}{\partial c_i}D$ is stable, which by the previous 
analysis occurs only when $c$ is selective for only one state.  For this case, $\frac{\partial \Phi_i}{\partial c_i} = \frac{1}{\alpha_i}I$ where $\alpha_i$ is the probability of the 
state neuron $i$ is selective for, once again by the proof of Theorem \ref{BCMStablePoints}.  Therefore 
\begin{equation}
	J E[\dot{\vect{m}}] = \left[\operatorname{diag}\left(\frac{1}{\alpha_i}\right) (I - L)^{-1} \right]\ktimes[  D^T P D]
\end{equation}
Each part of the Kronecker product is positive semidefinite, so the expected update is positive semidefinite.  
Thus, the expected update is stable, and each neuron is selective in expectation for one and only one member
of the class.  
\end{proof}

The resulting weights are given by $\vect{m} = D^{-1}(I - \mathcal{L})\vect{c}$
This network allows for a parallel search for the mixture components.  We simply feed 
the same inputs to a collection of triplet BCM neurons, and connect them with an inhibitory 
network.  The inhibitory network will drive the probability of them converging to the same input down, 
which gives an approximate parallel search of the parameter space.  
We demonstrate this parallel search with the following two dimensional example.
\begin{experiment}
Let $\vect d$ be distributed according to a Gaussian mixture model with two components.  
\[P(\vect d) = \sum_{i=1}^2 A \alpha_i \exp\left(- \frac{\| \vect d - \vect e_i\|^2}{2\sigma}\right) \]
where $\vect e_i$ is a standard basis vector.
$\vect{\alpha} = [0.4, 0.6]$, and $\sigma = 10^{-4}$.  We build a network of two neurons 
with a lateral matrix \[L = a \begin{bmatrix} 0 & 1 \\ 1 & 0\end{bmatrix}\]
Each neuron modified its weights according to the triplet BCM rule, and independent triples
were drawn from the mixture model.  
We randomly initialize the weights on the unit ball, and measure the number of times the neurons converge to be selective for different mixture states.  We used 30 different random initializations per $a$.  

\begin{table}[h]
\centering
\begin{tabular}{r|l}
$a$ & \% different \\
\hline
-.25  &  83.3\\
-.125 & 53.3 \\
 0    & 46.7 \\
.125  & 26.7 \\
.25   & 3.3 \\
\end{tabular}
\caption[Probability that a network of triplet BCM neurons will converge to different
mixtures]{Number of times a network of two triplet BCM neurons converge to the same 
state out of twenty repetitions.  The lateral network was set at $L=a\begin{bmatrix} 0 & 1 \\ 1 & 0 \end{bmatrix}$}
\end{table}
Even a modest excitatory network causes the neurons to converge to the same state with high 
probability.  An inhibitory network has a similar but weaker effect in the opposite direction,
encouraging the neurons to become selective for different mixture components.  

\end{experiment}

\section{Conclusion}
We presented a novel learning rule we call triplet BCM.  We proved that under a multi-view assumption and input from a restricted class of mixture models, this learning rule provably learns selectivity to one mixture mean.  The only restriction on the mixture models are that the mixture means are linearly independent, and that they have bounded variance.  Furthermore, this learning rule can be trivially implemented neurally without any feedback mechanism, and only a sliding threshold needs to be maintained per neuron.  We also demonstrated that networks of triplet BCM neurons can be combined with a lateral network to force each neuron to learn a different component of the mixture model.  

We believe the connection between classical BCM and tensor decomposition provides new insights into the information processing role of neural circuits.  A future publication will illustrate the connection between this work and synaptic modification through spike timing dependent plasticity, an important learning mechanism in cortex.  

\bibliographystyle{plain}
\bibliography{bibliography}

\end{document}